\DeclareMathOperator{\poly}{poly}
\newcommand{\defparaproblem}[4]{
 \vspace{2mm}
\noindent\fbox{
 \begin{minipage}{0.96\textwidth}
 \begin{tabular*}{\textwidth}{@{\extracolsep{\fill}}lr} #1 & \\ \end{tabular*}
 {\textbf{Input:}} #2 \\
 {\textbf{Parameter:}} #3 \\
 {\textbf{Question:}} #4
 \end{minipage}
 }
 \vspace{2mm}
}
\newcommand{\N}{\mathbb{N}}
\newcommand{\M}{\mathcal{M}}
\DeclareMathOperator{\before}{\texttt{before}}
\DeclareMathOperator{\after}{\texttt{after}}
\keywords{Parameterized Complexity, Treewidth, XALP, XNLP}
\title{On the Complexity of Problems on Tree-structured Graphs}
\author{Hans L. Bodlaender}{Department of Information and Computing Sciences, Utrecht University}{h.l.bodlaender@uu.nl}{ https://orcid.org/
0000-0002-9297-3330}{}
\author{Carla Groenland}{Faculty of Electrical Engineering, Mathematics and Computer Science, Technical University Delft}{c.e.groenland@tudelft.nl}{https://orcid.org/
0000-0002-9878-8750}{This research was done when Carla Groenland was associated with Utrecht University and supported by European grants  CRACKNP (grant agreement No 853234) and GRAPHCOSY (number 101063180).}
\author{Hugo Jacob}{LIRMM, Université de Montpellier, CNRS, Montpellier, France}{hugo.jacob@ens-paris-saclay.fr}{ https://orcid.org/
0000-0003-1350-3240}{}
\author{Marcin Pilipczuk}{University of Warsaw}{malcin@mimuw.edu.pl}{https://orcid.org/
0000-0001-5680-7397}{\flag{logo-erc}\flag{logo-eu}This research is a part of a project that have received funding from the European Research Council (ERC)
under the European Union's Horizon 2020 research and innovation programme
Grant Agreement 714704.}
\author{Micha\l{} Pilipczuk}{University of Warsaw}{michal.pilipczuk@mimuw.edu.pl}{https://orcid.org/
0000-0001-7891-1988}{This research is a part of a project that have received funding from the European Research Council (ERC)
under the European Union's Horizon 2020 research and innovation programme
Grant Agreement 948057.}
\authorrunning{Bodlaender et al.}
\date{June 2022}
\begin{document}

\nolinenumbers
\maketitle

\begin{abstract}
    In this paper, we introduce a new class of parameterized problems, which we call XALP: the class of all parameterized
    problems that can be solved  in $f(k)n^{O(1)}$ time and $f(k)\log n$ space on a non-deterministic Turing Machine
    with access to an auxiliary stack (with only top element lookup allowed).
    Various natural problems on `tree-structured graphs' are complete for this class: we show that {\sc List Colouring} and {\sc All-or-Nothing Flow} parameterized by treewidth are XALP-complete. Moreover,  {\sc Independent Set} and {\sc Dominating Set} parameterized by treewidth divided by $\log n$, and {\sc Max Cut} parameterized by cliquewidth are also XALP-complete. 

Besides finding a `natural home' for these problems, we also pave the road for future reductions.
We give a number of equivalent characterisations of the class XALP,
e.g., XALP is the class of problems solvable by an Alternating Turing Machine whose runs have tree size at most $f(k)n^{O(1)}$ and use
    $f(k)\log n$ space.
    Moreover, we introduce  `tree-shaped' variants of {\sc Weighted CNF-Satisfiability} and {\sc Multicolour Clique} that are XALP-complete.
\end{abstract}

\section{Introduction}
A central concept in complexity theory is \emph{completeness} for a
class of problems. Establishing completeness of a problem for a class
pinpoints its difficulty, and gives implications on resources (time, memory or
otherwise) to solve the problem (often, conditionally on complexity
theoretic assumptions). The
introduction of the W-hierarchy by Downey and Fellows in the 1990s
played an essential role in the analysis of the complexity of
parameterized problems~\cite{DowneyF95,DowneyF95a,DowneyF99}. Still, several problems are
suspected not to be complete for a class in the W-hierarchy, and other
classes of parameterized problems with complete problems were introduced,
e.g., the A-, AW-, and M-hierarchies. (See e.g., \cite{AbrahamsonDF95,DowneyF99,FlumG06}.)
In this paper, we introduce a new class of parameterized complexity,
which appears to be the natural home of several `tree structured' parameterized
problems. 
This class, which we call XALP, can be seen as the parameterized
version of a class known in classic complexity theory as 
NAuxPDA[$\poly,\log$] (see \cite{AllenderCLPT14}), or ASPSZ($\log n$, $n^{O(1)}$) \cite{Ruzzo80}.

It can also be seen as the `tree
variant' of the class XNLP, which is the class of parameterized problems that can be solved by a non-deterministic Turing machine using $f(k)\log n$ space in $f(k)n^{O(1)}$ time for some computable function $f$, where $k$ denotes the parameter and $n$ the input size.
It was introduced in 2015 by Elberfeld et al.~\cite{ElberfeldST15}. Recently, several parameterized problems
were shown to be complete for XNLP~\cite{BodlaenderCW22,BodlaenderGNS21,BodlaenderGJJL22}; in this collection,
we find many problems for `path-structured graphs', including well known
problems that are in XP with pathwidth or other linear width measures
as parameter, and linear ordering graph problems like {\sc Bandwidth}. 

Thus, we can view XALP as the `tree' variant of XNLP and as such, we 
expect that many problems known to be in XP (and expected not to be in FPT) when parameterized by treewidth will
be complete for this class.
We will prove the following problems to be XALP-complete in this paper:
\begin{itemize}
    \item {\sc Binary CSP}, {\sc List Colouring} and {\sc All-or-Nothing Flow} parameterized by treewidth;
    \item {\sc Independent Set} and {\sc Dominating Set} parameterized by treewidth divided by $\log n$, where $n$ is the number of vertices of the input graph;
    \item {\sc Max Cut} parameterized by cliquewidth. 
\end{itemize}
The problems listed in this paper should be regarded as examples of a general technique,
and we expect that many other problems parameterized by treewidth, cliquewidth and
similar parameters will be XALP-complete. 
In many cases, a simple modification of an XNLP-hardness proof with
pathwidth as parameter shows XALP-hardness for the same problem with treewidth as parameter. 

In addition to pinpointing the exact
complexity class for these problems, such results have further consequences.
First, XALP-completeness implies XNLP-hardness, and thus hardness for
all classes $\mathrm{W}[t]$, $t \in \mathbb{N}$. Second, a conjecture by
Pilipczuk and Wrochna~\cite{PilipczukW18}, if true, implies that every algorithm for an XALP-complete problem that works in XP time (that is,
 $n^{f(k)}$ time) cannot simultaneously use FPT space (that is,
$f(k)n^{O(1)}$ space). Indeed, typical XP algorithms for problems
on graphs of bounded treewidth use dynamic programming, with tables
that are of size $n^{f(k)}$. 

\subparagraph*{Satisfiability on graphs of small treewidth}
Real-world SAT instances tend to have a special structure to them. One of the measures capturing the structure is the \emph{treewidth} $\mathcal{TW}(\phi)$ of the given formula $\phi$. This is defined by taking the treewidth of an  associated graph, usually a bipartite graph on the variables on one side and the clauses on the other, where there is an edge if the variable appears in the clause.
Alekhnovitch and Razborov \cite{AlekhnovichR11}
raised the question of whether satisfiability of formulas of small treewidth can be checked in polynomial space, which was positively answered by Allender et al.~\cite{AllenderCLPT14}. However, the running time of the algorithm is $3^{\mathcal{TW}(\phi)\log|\phi|}$ rather than $2^{O(\mathcal{TW}(\phi))}|\phi|^{O(1)}$, where $|\phi|=n+m$ for $n$ the number of variables and $m$ the number of clauses. They also conjectured that the $\log|\phi|$ factor in the exponent for the running time cannot be improved upon without using exponential space. 

To support this conjecture, Allender et al.~\cite{AllenderCLPT14} show that {\sc Satisfiability} where the treewidth of the associated graph is
$O(\log n)$ is complete for a class of problems called SAC$^1$: these are the problems that
can be recognised by `uniform' circuits with semi-unbounded fan-in of depth $O(\log n)$ and polynomial size. 
This class has also been shown to be equivalent to classes of problems that are defined using Alternating Turing Machines and non-deterministic Turing machines with access to an auxiliary stack \cite{Ruzzo80,VENKATESWARAN1991380}. 
We define parameterized analogues of the classes defined using Alternating Turning Machines or non-deterministic Turing machines with access to an auxiliary stack, and show these to be equivalent. 
This is how we define our class XALP. 

Allender et al.~\cite{AllenderCLPT14} considers {\sc Satisfiability} where the treewidth of the associated graph is
$O(\log^k n)$ for all $k\geq 1$. 
We restrict ourselves to the case $k=1$ since this is where we could find interesting complete problems, but we expect that a similar generalisation is possible in our setting.

The main contribution of our paper is to transfer definitions and 
results from the classical world to the parameterized setting, 
by which we provide a natural framework to establish the
complexity of many well-known parameterized problems. 
We provide a number of natural XALP-complete problems, 
but we expect that in the future it will be shown that 
XALP is the `right box' for many more problems of interest. 

\subparagraph*{Subsequent work}
Building upon our work, more problems were shown to be hard or complete 
for XALP. The \textsc{Perfect Phylogeny} or 
\textsc{Triangulating Coloured Graphs} problem was shown 
to be XALP-complete by de Vlas~\cite{deVlas23}.
In \cite{BodlaenderGJ22-tpw}, it was shown that 
{\sc Tree-Partition-Width} and {\sc Domino Treewidth} are XALP-complete,
which can be seen as an 
analogue to \textsc{Bandwidth} being XNLP-complete. 
Finding integral 2-commodities was shown to be XALP-complete with treewidth as
parameter in \cite{multicommodityflow}.

In~\cite{BodlaenderGP23}, the complexity class XSLP was introduced; this class
characterises the complexity of several natural problems parameterised
by \emph{treedepth}.

\subparagraph*{Paper overview} 
In Section~\ref{section:definitions}, we give a number of definitions, discuss the classical analogues of
XALP, and formulate a number of key parameterized problems.
Several equivalent characterisations of the class XALP are given in Section~\ref{section:characterisation}.
In Section~\ref{section:complete}, we introduce a `tree variant' of the well-known \textsc{Multicolour Clique} problem. We call this problem {\sc Tree-Chained Multicolour Clique}, and show it to be XALP-hard with 
a direct proof from an acceptance problem of a suitable
type of Turing Machine, inspired by Cook's proof of the NP-completeness
of \textsc{Satisfiability}~\cite{Cook71}. 
In Section~\ref{sec:complete2}, we build on this and give a number of other examples of XALP-complete
problems, including tree variants of \textsc{Weighted Satisfiability} and several 
problems parameterized by treewidth or another tree-structured graph parameter. Some final remarks are made in Section~\ref{section:conclusions}.

\section{Definitions}
\label{section:definitions}
We assume that the reader is familiar with a number of well-known notions from
graph theory and parameterized complexity,
e.g., FPT, the W-hierarchy, clique, independent set, etc. (See e.g.,~\cite{CyganFKLMPPS15}.)

A {\em tree decomposition} of a graph $G=(V,E)$ is a pair $(T=(I,F)$, $\{X_i~|~ i\in T\})$ with
$T=(I,F)$ a tree and $\{X_i~|~i\in I\})$ a family of (not necessarily disjoint)
subsets of $V$ (called {\em bags}) such that $\bigcup_{i\in I} X_i = V$,
for all edges $vw\in E$, there is an $i$ with $v,w\in X_i$, and for all
$v$, the nodes $\{i\in I~|~v\in X_i \}$ form a connected subtree of $T$.
The {\em width} of a tree decomposition $(T, \{X_i~|~ i\in T\})$ is
$\max_{i\in I} |X_i|-1$, and the {\em treewidth} of a graph $G$ 
is the maximum width over all tree decompositions of $G$. A
{\em path decomposition} is a tree decomposition $(T=(I,F)$, $\{X_i~|~ i\in T\})$
with $T$ a path, and the {\em pathwidth} is the minimum width over all path
decompositions of $G$.

\subsection{Turing Machines and Classes}
\label{section:tmclasses}

We assume the reader to be familiar with the basic concept of a Turing Machine (TM).
Here, we
consider TMs that have access to both a fixed input tape (where the machine can only read), and a work tape of specified size (where the machine can both
read and write). We consider \textit{Non-deterministic Turing Machines} (NTM), where
the machine can choose between different transitions, and accepts, if at
least one choice of transitions leads to an accepting state, and 
\textit{Alternating Turing Machines} (ATM), where the machine can both make
non-deterministic steps (accepting when at least one choice leads to
acceptance), and \textit{co-non-deterministic steps} (accepting when both choices
lead to acceptance). We assume a co-non-deterministic step always makes a
binary choice, i.e, there are exactly two transitions that can be done.

Acceptance of an ATM $A$ can be modelled by a rooted binary tree $T$, sometimes  called a {\em{run}} or a {\em{computation tree}} of the machine. Each
node of $T$ is labelled with a configuration of $A$: the 4-tuple consisting of the machine state, work tape contents, location of
work tape pointer, and location of input tape pointer. Each
edge of $T$ is labelled with a transition. The starting configuration
is represented by the root of $T$.
A node with one
child makes a non-deterministic step, and the arc is labelled with a
transition that leads to acceptance; a node with two children makes
a co-non-deterministic step, with the children the configurations after
the co-non-deterministic choice. Each leaf is a configuration with
an accepting state. The {\em time} of the computation is the depth of
the tree; the {\em treesize} is the total number of nodes in this computation tree.
For more information, see e.g.,~\cite{Ruzzo80,PilipczukW18}.
A {\em computation path} is a path from root to leaf in the tree.

We also consider NTMs which additionally have access to an auxiliary stack. For those, a transition can also move the top element of the stack to
the current location of the work tape (`pop'), or put a symbol at the top of the stack (`push'). We stress that only the top element can be accessed or modified, the machine cannot freely read other elements on the stack.

We use the notation N$[t(n,k),s(n,k)]$ to denote languages recognisable by a NTM running in time $t(n,k)$ with $s(n,k)$ working space and A$[t(n,k),s(n,k)]$ to denote languages recognisable by an ATM running in treesize $t(n,k)$ with $s(n,k)$ working space. We note that we are free to put the constraint that \emph{all} runs have treesize at most $t(n,k)$, since we can add a counter that keeps track of the number of remaining steps, and reject when this runs out (similarly to what is done in the proof of Theorem \ref{thm:equiv}).
We write NAuxPDA$[t(n,k),s(n,k)]$ to denote languages recognisable by a NTM with a stack (AUXiliary Push-Down Automaton) running in time $t(n,k)$ with $s(n,k)$ working space. 

Ruzzo \cite{Ruzzo80} showed that for any function $s(n)$, NAuxPDA[$n^{O(1)}$ time,$s(n)$ space] = A[$n^{O(1)}$ treesize, $s(n)$ space]. Allender et al. \cite{AllenderCLPT14} provided natural complete problems when $s(n)=\log^{k}(n)$ for all $k\geq 1$ (via a circuit model called SAC, which we will not use in our paper). Our interest lies in the case $k=1$, where it turns out the parameterized analogue is the natural home of `tree-like' problems.

Another related work by Pilipczuk and Wrochna \cite{PilipczukW18} shows that there is a tight relationship between the complexity of \textsc{3-Colouring} on graphs of treedepth, pathwidth, or treewidth $s(n)$ and problems that can be solved by TMs with adequate resources depending on $s(n)$.

\subsection{From classical to parameterized}
\label{section:classicalanalogues}
In this paper, we introduce the class XALP $ = $ NAuxPDA$[f \text{poly}, f \log]$. Following~\cite{BodlaenderGNS21},
we use the name XNLP for the class N$[f \text{poly}, f \log]$; $f \text{poly}$ is shorthand notation for $f(k)n^{O(1)}$ for
some computable function $f$, and $f \log$ shorthand notation for $f(k)\log n$.

The crucial difference between the existing classical results and our results is that we consider parameterized complexity classes. 
These classes are closed under parameterized reductions, i.e. reductions where the parameter of the reduced instance must be bounded by the parameter of the initial instance. 
In our context, we have an additional technicality due to the relationship between time and space constraints.
While a logspace reduction is also a polynomial time reduction, 
a reduction using $f(k)\log n$ space (XL) could use up to 
$n^{f(k)}$ time (XP). 
XNLP and XALP are closed under \emph{pl-reductions} where the space bound is $f(k) + O(\log n)$ (which implies FPT time), and under \emph{ptl-reductions} running in $f(k)n^{O(1)}$ time \emph{and} $f(k)\log n$ space.

We now give formal definitions.

A {\em parameterized reduction} 
    from a parameterized problem $Q_1 \subseteq \Sigma_1^\ast \times \N$ to a parameterized problem $Q_2 \subseteq \Sigma_2^\ast \times \N$ is a function
    $f \colon \Sigma_1^\ast \times \N \rightarrow \Sigma_2^\ast \times \N$ such that the following holds.
    \begin{enumerate}
        \item For all $(x,k) \in \Sigma_1^{\ast} \times \N$, $(x,k)\in Q_1$ if and only if $f((x,k)) \in Q_2$.
        \item There is a computable function $g$ such that for all $(x,k) \in \Sigma_1^\ast \times \N$, if $f((x,k)) = (y,k')$, then $k' \leq g(k)$.
  \end{enumerate}
  If there is an algorithm that computes $f((x,k))$ in space $O(g(k) + \log n)$, with $g$ a computable function and $n=|x|$ the number of bits to denote $x$, then the reduction is a \emph{parameterized logspace reduction} or \emph{pl-reduction}.
    
  If there is an algorithm that computes $f((x,k))$ in time $g(k)n^{O(1)}$ and space $O(h(k)\log n)$, with $g,h$ computable functions and $n=|x|$ the number of bits to denote $x$, then the reduction is a \emph{parameterized tractable logspace reduction} or \emph{ptl-reduction}.

\section{Equivalent characterisations of XALP}
\label{section:characterisation}
In this section, we give a number of equivalent characterisations of XALP.
\begin{theorem}
\label{thm:equiv}
    The following parameterized complexity classes are all equal.
    \begin{enumerate}
    \item \label{item:stack} NAuxPDA[$f \poly$, $f\log$], the class of parameterized decision
problems for which instances of size $n$ with parameter $k$ can be solved by a non-deterministic Turing machine with $f(k)\log n$ memory in $f(k)n^{O(1)}$ time when given a stack, for some computable function~$f$. 
    \item \label{item:poly_tree} The class of parameterized decision
problems for which instances of size $n$ with parameter~$k$ can be solved by an alternating Turing machine with $f(k)\log n$ memory whose computation tree is a binary tree on $f(k)n^{O(1)}$ nodes, for some computable function~$f$. 
    \item \label{item:embedded-computation} The class of parameterized decision
problems for which instances of size $n$ with parameter~$k$ can be solved by an alternating Turing machine with $f(k)\log n$ memory whose computation tree is obtained from a binary tree of depth $O(\log n)+f(k)$ by subdividing each edge $f(k)n^{O(1)}$ times, for some computable function $f$. 
    \item \label{item:alt} The class of parameterized decision
problems for which instances of size $n$ with parameter $k$ can be solved by an alternating Turing machine with $f(k)\log n$ memory, for which the computation tree has size $f(k)n^{O(1)}$ and uses $O(\log n)+f(k)$ co-non-deterministic steps per computation path,  for some computable function $f$. 
\end{enumerate}
\end{theorem}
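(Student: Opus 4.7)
I will establish the cycle of implications $(3) \Rightarrow (2) \Rightarrow (1) \Rightarrow (4) \Rightarrow (3)$. The two structural passages $(3) \Rightarrow (2)$ and $(4) \Rightarrow (3)$ are short manipulations of the computation tree, whereas $(2) \Rightarrow (1)$ and $(1) \Rightarrow (4)$ are parameterized adaptations of Ruzzo's classical equivalence \cite{Ruzzo80} between $\mathrm{NAuxPDA}$ and alternating space computation. The crucial point in re-examining those constructions is to check that the bounds $f(k)n^{O(1)}$ on time/tree size and $f(k)\log n$ on space propagate correctly and that the simulations are uniform in $k$.

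The two structural steps are short. For $(3) \Rightarrow (2)$ I count: a binary tree of depth $O(\log n)+f(k)$ has at most $2^{O(\log n)+f(k)} = f(k)n^{O(1)}$ edges, and subdividing each by $f(k)n^{O(1)}$ new vertices still leaves $f(k)n^{O(1)}$ nodes in total. For $(4) \Rightarrow (3)$ I take an ATM witnessing~(4), contract every maximal single-child chain of its computation tree to a single edge, and observe that the resulting skeleton is binary and, by hypothesis, of depth $O(\log n)+f(k)$. I then pad every chain to a common length $\ell = f(k)n^{O(1)}$ by adding a small counter on the work tape that fires dummy transitions; this yields exactly the tree shape required by~(3) without changing the accepted language.

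For $(2) \Rightarrow (1)$, the $\mathrm{NAuxPDA}$ performs a depth-first search of the ATM's computation tree. The current ATM configuration lives on the work tape in $f(k)\log n$ cells; at every co-non-deterministic node the machine pushes the sibling configuration on its auxiliary stack and descends into one child, and once that subtree returns ``accept'' it pops the pending sibling and continues the DFS from there. The running time is bounded by the tree size $f(k)n^{O(1)}$, and only one configuration is live on the work tape at any moment, so the work-tape space stays within $f(k)\log n$.

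The main obstacle is $(1) \Rightarrow (4)$, which relies on Ruzzo's divide-and-conquer technique. Given an $\mathrm{NAuxPDA}$ $M$ with time bound $T = f(k)n^{O(1)}$ and space bound $f(k)\log n$, I introduce \emph{surface configurations} of $M$, consisting of state, work-tape contents, input-head position, and top-of-stack symbol; each fits in $O(f(k)\log n)$ bits. A pair $(c,c')$ of surface configurations is called \emph{$t$-realizable} if $M$ can evolve from $c$ to $c'$ in $t$ steps without ever popping below the starting stack height. The ATM certifies $t$-realizability recursively by non-deterministically either (i) stripping off a matched outermost push and pop, reducing $t$ by two, or (ii) guessing a midpoint surface configuration together with a split time at which the stack is back at the starting height, and co-non-deterministically verifying the two resulting sub-instances in parallel. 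A standard balancing argument on the stack-height profile of an accepting computation ensures that in case~(ii) one can always choose a split into pieces of length at most $\lceil 2t/3 \rceil$, so the recursion depth stays $O(\log T) = O(\log n)+f(k)$, which is exactly the co-non-deterministic budget demanded by~(4); the overall tree size remains $f(k)n^{O(1)}$ because at every level we only range over $f(k)n^{O(1)}$ candidate surface configurations and polynomially many split times. The delicate point I expect to be trickiest is verifying that this balancing survives when interleaved with the case-(i) reductions, so that the recursion depth genuinely stays logarithmic even on computations whose stack-height profile is very ``deep''.
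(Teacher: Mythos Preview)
Your cycle $(3)\Rightarrow(2)\Rightarrow(1)\Rightarrow(4)\Rightarrow(3)$ is a legitimate alternative to the paper's $(1)\subseteq(2)\subseteq(3)\subseteq(4)\subseteq(1)$, and the steps $(3)\Rightarrow(2)$, $(4)\Rightarrow(3)$ and $(2)\Rightarrow(1)$ are all correct as you describe them. However, there is a genuine gap in your $(1)\Rightarrow(4)$: the claim that ``in case~(ii) one can always choose a split into pieces of length at most $\lceil 2t/3\rceil$'' is false with only the two rules you list. Consider the family of stack-height profiles $C_1=\texttt{ud}$ and $C_{k+1}=\texttt{ud}\,\texttt{u}\,C_k\,\texttt{d}$ (where $\texttt{u}$ is a push and $\texttt{d}$ a pop), of length $|C_k|=4k-2$. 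In $C_k$ the first push is matched by the pop at time~$2$, so your case~(i) does \emph{not} apply; and the only intermediate return to the starting height is at time~$2$, forcing case~(ii) to split into pieces of lengths $2$ and $|C_k|-2$. On the large piece, one application of~(i) strips the outer pair and leaves $C_{k-1}$. Iterating, the deep branch of your ATM uses $k-1=\Theta(t)$ co-non-deterministic steps, not $O(\log t)$. The ``interleaving'' issue you flag is real, but the obstruction is more basic: with only (i) and (ii) there is simply no balanced option available. The known fixes require a third rule that splits at an \emph{interior} matched push/pop pair $(p,q)$ with $q-p\in[t/3,2t/3]$, yielding a three-way decomposition; that rule is what actually drives the logarithmic depth, and it is not implied by your sketch.

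The paper sidesteps this difficulty by decoupling the two ideas. For $(1)\subseteq(2)$ it runs Ruzzo's simulation exactly as you describe, but with \emph{no} balancing requirement: each push is matched with its (guessed) pop and the computation is co-non-deterministically split there, regardless of where that split falls. This already yields an ATM of polynomial tree size, because each step of the $\mathrm{NAuxPDA}$ is simulated at most once somewhere in the tree; the depth, however, may be large. The logarithmic co-non-deterministic depth is then obtained in a completely separate step $(2)\subseteq(3)$ that rebalances an \emph{arbitrary} polynomial-size ATM computation tree, using an ``advice configuration'' technique: the new ATM carries one guessed accepting configuration in memory and, at every universal branch, guesses a balanced $1/3$--$2/3$ decomposition of the remaining subtree into pieces delimited by that advice (and possibly a guessed least common ancestor). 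This generic tree-balancing is where the real work of the proof lies, and it is orthogonal to the stack structure of the original machine. Your route can be repaired by adding the interior-pair split to the Ruzzo recursion, but as written it does not go through.
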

\begin{proof}
The proof is similar to the equivalence proofs for the classical analogues, and added for convenience of the reader.
We prove the theorem by proving the series of inclusions \ref{item:stack} $\subseteq $ \ref{item:poly_tree} $\subseteq  $ \ref{item:embedded-computation} $\subseteq  $ \ref{item:alt} $\subseteq  $ \ref{item:stack}.

\noindent\textbf{\ref{item:stack} $\subseteq $ \ref{item:poly_tree}.} Consider a problem that can be solved by a non-deterministic Turing Machine $T$ with a stack and $f(k)\log n$ memory in $f(k)n^{O(1)}$ time. We will simulate $T$ using an alternating Turing machine $T'$.

We place three further assumptions on $T$, which can be implemented by changing the function $f$ slightly if needed.
\begin{itemize}
    \item The Turing machine $T$ has two counters. One keeps track of the height of the stack, and the other keeps track of the number of computation steps. A single computation step may involve several operations; we just need that the running time is polynomially bounded in the number of steps.
\item We assume that $T$ only halts with acceptance when the stack is empty. (Otherwise, do not yet accept,
but pop the stack using the counter that tells the height of the stack, until the stack is empty.)
\item Each pop operation performed by $T$ is a deterministic step. This can be done by adding an extra state to $T$ and splitting a non-deterministic step into a non-deterministic step and a deterministic step if needed.
\end{itemize}
We define a \textit{configuration} as a tuple which includes the state of $T$, the value of the two pointers and the content of the memory. In particular, this does not contain the contents of the stack and so a configuration can be stored using $O(f(k)\log n)$ bits. (Note that the value of both pointers is bounded by $f(k)n^{O(1)}$.)

We will build a subroutine $A(c_1,c_2)$ which works as follows.
\begin{itemize}
    \item The input $c_1$, $c_2$ consists of two configurations with the same stack height.
    \item The output is whether $T$ has an accepting run from $c_1$ to $c_2$ without popping the top element from the stack in $c_1$; the run may pop elements that have yet to get pushed.
\end{itemize}
We write Apply($c$, POP($s$)) for the configuration that is obtained when we perform a pop operation in configuration $c$ and obtain $s$ from the stack. This is only defined if $T$ can do a pop operation in configuration $c$ (e.g. it needs to contain something on the stack).
We define the configuration Apply($c$, PUSH($s$)) in a similar manner, where this time $s$ gets pushed onto the stack.

We let $T'$ simulate $T$ starting from configuration $c_s$ as follows.
Our alternating Turing machine $T'$ will start with the following non-deterministic step: guess $c_a$ the configuration that accepts
at the end of the run. It then performs the subroutine $A(c_s,c_a)$.

We implement $A(c_s,c_a)$ as follows.
A deterministic or non-deterministic step of $T$ is carried out as usual.

If $T$ is in some configuration $c$ and wants to push $s$ to the stack, then let $c'=$ Apply($c$, PUSH($s$)) and let $T'$ perform a non-deterministic step that guesses a configuration $c_2'$ with the same stack height as $c'$ for which the next step is to pop (and the number of remaining computation steps is plausible). Let $c_2 =$ Apply($c_2'$, POP($s$)). We make $T'$ do a co-non-deterministic step consisting of two branches:
\begin{itemize}
    \item $T'$ performs the subroutine $A(c',c_2')$.
    \item $T'$ performs the subroutine $A(c_2',c_a)$.
\end{itemize}
We ensure that in configuration $c_2'$, the number of steps taken is larger than in configuration $c'$. This ensures that $T'$ will terminate.

Since a configuration can be stored using $O(f(k)\log n)$ and $T'$ always stores at most a bounded number of configurations,  $T'$ requires only $O(f(k)\log n)$ bits of memory. 
The computation tree for $T'$ is binary. 
The total number of nodes of the computation tree of $T'$ is $f(k)n^{O(1)}$ since each computation step of $T$ appears at most once in the tree (informally: our co-non-deterministic steps split up the computation path of $T$ into two disjoint parts), and we have added at most a constant number of steps per step of $T$. 
To see this, the computation tree of $T'$ may split a computation path $c\to_{\text{push}} c'\to \dots \to c_2'\to_{\text{pop}} c_2\to\dots \to c_a$ of $T$ into two parts: one branch will simulate $c'\to \dots \to c_2'$ and the other branch will simulate $c_2\to \dots \to c_a$.
At most a constant number of additional nodes (e.g. the node which takes the co-non-deterministic step) are added to facilitate this. Importantly, the configurations implicitly stored a number of remaining computation steps, 
and so $T'$ can calculate from $c',c_2'$ how many steps $T$ is supposed to take to move between $c'$ and $c_2'$.

\noindent \textbf{\ref{item:poly_tree} $\subseteq  $ \ref{item:embedded-computation}.} The intuition behind this proof is to use that any $n$-vertex tree has a tree decomposition of bounded treewidth of depth $O(\log n)$.

Let $A$ be an alternating Turing machine for some parameterized problem with a computation tree of size $f(k)n^{O(1)}$ and $f(k)\log n$ bits of memory. 

We build an alternating Turing machine $B$ that simulates $A$ for which the computation tree is a binary tree which uses $O(f(k) + \log n)$ co-non-deterministic steps per computation branch and $O(f(k)\log n)$ memory. We can after that ensure that there are $f(k)n^{O(1)}$ steps between any two co-non-deterministic steps by adding `idle' steps if needed. 

We ensure that $B$ always has \emph{advice} in memory: 1 configuration for which $A$ accepts. In particular, if $c'$ is the configuration stored as advice when $A$ is in configuration $c$ with a bound of $n$ steps, then $B$ checks if $A$ can get from $c$ to $c'$ within $n$ steps. 

We also maintain a counter for the number of remaining steps: the number of nodes that are left in the computation tree of $A$, when rooted at the current configuration $c$ not counting the node of $c$ itself. In particular, the counter is $0$ if $c$ is supposed to be a leaf.

We let $B$ simulate $A$ as follows. Firstly, if no advice is in memory, it makes a non-deterministic step to guess a configuration as advice.

Suppose that $A$ is in configuration $c$ with $n_0$ steps left. We check the following in order. If $c$ equals the advice, then we accept. If $n_0\leq 0$, then we reject.
If the next step of $A$ is non-deterministic or deterministic step, then we perform the same step. 
The interesting things happen when $A$ is about to perform a co-non-deterministic step starting from $c$ with $n_0$ steps left. If $n_0\leq 1$, then we reject: there is no space for such a step.
Otherwise, we guess $n_1,n_2\geq 0$ such that $n_1+n_2=n_0-2$, and children $c_1,c_2$ of $c$ in the computation tree of $A$. Renumbering if needed, we may assume that the advice $c'$ is supposed to appear in the subtree of $c_1$. We also guess an advice $c'_2$ for $c_2$.
We create a co-non-deterministic step with two branches, one for the computation starting from $c_1$ with $n_1$ steps, and the other from $c_2$ with $n_2$. We describe how we continue the computation starting from $c_1$; the case of $c_2$ is analogous.

Recall that some configuration $c'$ has been stored as advice. We want to ensure that the advice is limited to one configuration. First, we non-deterministically guess a configuration $c''$. We non-deterministically guess whether $c''$ is an ancestor of $c'$. We perform different computation depending on the outcome.
\begin{itemize}
\item Suppose that we guessed that $c''$ is an ancestor of $c'$. We guess integers $\frac13 n_1\leq a,b \leq  \frac23 n_1$ with $a+b=n_1$. We do a co-non-deterministic step: one branch starts in $c_1$ with $c'$ as advice and $a$ steps, the other branch starts in $c'$ with $c''$ as advice and $b$ steps. 
% Then, do the co-non-deterministic step where this started with itself.
\item Suppose that $c''$ is not an ancestor of $c'$. We guess a configuration $\ell$, corresponding to the least common ancestor of $c'$ and $c''$ in the computation tree. We guess integers $0\leq a,b,a',b' \leq  \frac23 n_1$ with $a+b+a'+b'=n_1$. We perform a co-non-deterministic branch to obtain four subbranches: starting in $c$ with $\ell$ as advice and $a$ steps, $\ell$ with $c'$ as advice and $b$ steps, starting in $\ell$ with $c''$ as advice and $a'$ steps and starting in $c''$ with no advice and $b'$ steps. 
\end{itemize}

In order to turn our computation tree into a binary tree, we may choose to split the single co-non-deterministic step into two steps.

Since at any point, we store at most a constant number of configurations, this can be performed using $O(f(k)\log n)$ bits in memory.

It remains to show that $B$ performs $O(\log n+f(k))$ co-non-deterministic steps per computation path. The computation of $B$ starts with a counter for the number of steps which is at most $f(k)n^{O(1)}$; every time $B$ performs a co-non-deterministic step, this counter is multiplied by a factor of at most $\frac23$. The claim now follows from the fact that $\log(f(k)n^{O(1)})=O(\log n+\log f(k))$.

\noindent \textbf{\ref{item:embedded-computation} $\subseteq  $ \ref{item:alt}.} Let $T$ be an alternating Turing machine using $f(k)\log n$ memory whose computation fits in a tree obtained from a binary tree of depth $d$ by subdividing each edge $f(k)n^{O(1)}$ times. Then $T$ uses $f(k)n^{O(1)}$ time (with possibly a different constant in the $O(1)$-term) and performs at most $d$ co-non-deterministic steps per computation path. Hence this inclusion is immediate.

\noindent \textbf{\ref{item:alt} $\subseteq  $ \ref{item:stack}.}  We may simulate the alternating Turing machine using a non-deterministic Turing machine stack as follows. Each time we wish to do a co-non-deterministic branch, we put the current configuration $c$ onto our stack and continue to the left-child of $c$. Once we have reached an accepting state, we pop an element $c$ of the stack and next continue to the right child of $c$. The total computation time is bounded by the number of nodes in the computation tree and the memory requirement does not increase by more than a constant factor. (Note that in particular, our stack will never contain more than $\log n+f(k)$ elements.)
\end{proof}

Already in the classical setting, it is expected that NL $\subsetneq$ A[poly treesize, log space]. We stress the fact that this would imply XNLP $\subsetneq$ XALP, since we can always ignore the parameter. It was indeed noted in \cite[Corollary 3.13]{AllenderCLPT14} that the assumption NL $\subsetneq$  A[poly treesize, log space] separates the complexity of SAT instances of logarithmic pathwidth from SAT instances of logarithmic treewidth. Allender et al. \cite{AllenderCLPT14} formulate this result in terms of SAC$^1$ instead of the equivalent A[poly treesize, log space]. We expect that a parameterized analogue of SAC can be added to the equivalent characterisation above, but decided to not pursue this here. The definition of such a circuit class requires a notion of `uniformity' that ensures that the circuits have a `small description', which makes it more technical.

\section{XALP-completeness for a tree-chained variant of Multicolour Clique}
\label{section:complete}
\label{section:treechainedmulticolorclique}
Our first XALP-complete problem is a `tree' variant of the well-known \textsc{Multicolour Clique} problem.

\defparaproblem{\textsc{Tree-Chained Multicolour Clique}}{A binary tree $T=(I,F)$, an integer $k$, and for each $i \in I$, a collection of $k$ pairwise disjoint sets of vertices $V_{i,1},\ldots,V_{i,k}$, and a graph $G$ with vertex set $V=\bigcup_{i \in I, j \in [1,k]} V_{i,j}$.}{$k$.}{Is there a set of vertices $W \subseteq V$ such that $W$ contains exactly one vertex from each $V_{i,j}$ ($i \in I, j \in [1,k]$), and for each pair $V_{i,j},V_{i',j'}$ with $i=i'$ or $ii' \in F$, $j,j' \in [1,k]$, $(i,j)\neq(i',j')$, the vertex in $W \cap V_{i,j}$ is adjacent to the vertex in $W \cap V_{i',j'}$?}

This problem is the XALP analogue of the XNLP-complete problem \textsc{Chained Multicolour Clique}, in which the input tree $T$ is a path instead. This change of `path-like' computations to `tree-like' computations is typical when going from XNLP to XALP.

For the \textsc{Tree-Chained Multicolour Independent Set} problem, we have a similar input and question except that we ask for the vertex in $W \cap V_{i,j}$ and the vertex in $W \cap V_{i',j'}$ \emph{not} to be adjacent.
In both cases, we may assume that edges of the graphs are only between vertices of $V_{i,j}$ and $V_{i',j'}$ with $i=i'$ or $ii' \in F$, $j,j' \in [1,k]$, $(i,j) \neq (i',j')$.
We call \emph{tree-chained multicolour clique} (resp. \emph{independent set}) a set of vertices satisfying the respective previous conditions.

%The problems above can be seen as binary CSPs by replacing vertex choice by assignment choice.

Membership of these problems in XNLP seems unlikely, since it is difficult to handle the `branching' of the tree. However, in XALP this is easy to do using the co-non-deterministic steps and indeed the membership follows quickly.
\begin{lemma}
\label{lem:member}
    {\sc Tree-chained Multicolour Clique} is in  XALP.
\end{lemma}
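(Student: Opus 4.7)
The plan is to use the alternating Turing machine characterisation (item~\ref{item:poly_tree} of Theorem~\ref{thm:equiv}): I will build an ATM with $O(k\log n)$ working space whose binary computation tree has $\poly(n)$ nodes. The high-level idea is that the alternating structure of the computation tree can directly mirror the branching structure of the input tree~$T$, so the ``tree-like'' nature of the problem fits the model exactly.

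The ATM performs a top-down recursive traversal of~$T$. At any moment, its work tape stores (i) the identifier of the currently visited node $i \in I$, (ii) the $k$ vertex choices $(p_1,\dots,p_k)$ made at the parent of~$i$ (or a flag indicating that $i$ is the root), and, after guessing, (iii) the $k$ vertex choices $(w_1,\dots,w_k)$ being made at~$i$. Processing node~$i$ proceeds as follows: non-deterministically guess $w_j\in V_{i,j}$ for each $j\in[k]$; verify that all within-bag pairs $w_jw_{j'}$ with $j\neq j'$ are edges of~$G$ and, if $i$ is not the root, that all cross-bag pairs $w_jp_{j'}$ with $(i,j)\neq (\mathrm{parent}(i),j')$ are edges of~$G$; then, depending on the arity of~$i$ in~$T$, either accept (leaf), deterministically recurse into the unique child while passing $(w_1,\dots,w_k)$ as the new parent data, or take a co-non-deterministic step and recurse into the two children independently, in each branch using $(w_1,\dots,w_k)$ as the parent data.

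Correctness is immediate: an accepting computation tree corresponds bijectively to a global assignment $W$ satisfying all adjacency constraints, because each co-non-deterministic step corresponds to a branching node of~$T$ and every intra- and inter-bag constraint is checked exactly once. For the resource bounds, at any moment the work tape stores a constant number of node indices and vertex labels, which fits in $O(k\log n)$ bits. The computation tree contains one processing phase per node of~$T$; each phase performs a polynomial-time local verification (padded by a counter so that every phase has some fixed polynomial length), so the total number of configurations is at most $|I|\cdot\poly(n)=\poly(n)$, which has the required form $f(k)n^{O(1)}$.

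No substantive obstacle is anticipated: the problem is essentially designed to fit the ATM model, and all that must be verified is that the natural recursive verifier sits inside the claimed resource envelope. The only point that requires a little care is the bookkeeping at the recursive call: the machine must forget the grandparent's assignment and carry only the \emph{parent}'s assignment into each child, which is precisely what prevents the space from growing with the depth of~$T$ and keeps it at $O(k\log n)$.
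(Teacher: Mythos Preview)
Your proposal is correct and follows essentially the same approach as the paper: both build an ATM that traverses $T$ top-down, non-deterministically guesses the $k$ vertices at each node, checks adjacency against the parent's stored choices, forgets the grandparent's data, and uses a co-non-deterministic step at each branching node, yielding $O(k\log n)$ space and polynomial treesize. Your write-up is more detailed (e.g., the explicit bookkeeping and the padding counter), but the underlying algorithm is identical.
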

\begin{proof}
We simply traverse the tree $T$ with an alternating Turing machine that uses a co-non-deterministic step when it has to check two subtrees. When at $i\in I$, the machine first guesses a vertex for each $V_{i,j}$, $j \in [k]$. It then checks that these vertices form a multicolour clique with the vertices chosen for the parent of $i$. The vertices chosen for the parent can now be forgotten and the machine moves to checking children of $i$.
The machine works in polynomial treesize, and uses only $O(k\log n)$ space to keep the indices of chosen vertices for up to two nodes of $T$, the current position on $T$.
\end{proof}
We next show that \textsc{Tree-Chained Multicolour Clique} is XALP-hard.
We will use the characterisation of XALP where the computation tree of the alternating Turing machine is a specific tree (\ref{item:embedded-computation}), which allows us to control when co-non-deterministic steps can take place.

Let $\M$ be an alternating Turing machine with computation tree $T=(I,F)$, let $x$ be its input of size $n$, and $k$ be the parameter.
The plan is to encode the configuration of $\M$ at the step corresponding to node $i \in I$ by the choice of the vertices in $V_{i,1},\dots,V_{i,k'}$ (for some $k'=f(k)$). The possible transitions of the Turing Machine are then encoded by edges between $V_i$ and $V_{i'}$ for $ii' \in F$, where $V_j=\bigcup_{\ell \in [1,k']} V_{j,\ell}$.

A configuration of $\M$ contains the same elements as in the proof of Theorem \ref{thm:equiv}:
\begin{itemize}
    \item the current state of $\M$,
    \item the position of the head on the input tape,
    \item the working space which is $f(k)\log n$ bits long, and
    \item the position of the head on the work tape.
\end{itemize}
We partition the working space in $k'=f(k)$ pieces of $\log n$ consecutive bits, and have a set of vertices $V_{i,j}$ for each. 
Formally, we have a vertex $v_{q,p,b,w}$ in $V_{i,j}$ for each tuple $(q,p,b,w)$ where $q$ is the state of the machine, $p$ is the position of the head on the input tape, $b \in \{\texttt{after},\texttt{before}\}\uplus [\log n]$ indicates if the block of the work tape is before or after the head, or its position in the block, and $w$ is the current content of the $j$th block of the work tape.

\begin{figure}
    \centering
    \includegraphics[scale=0.65]{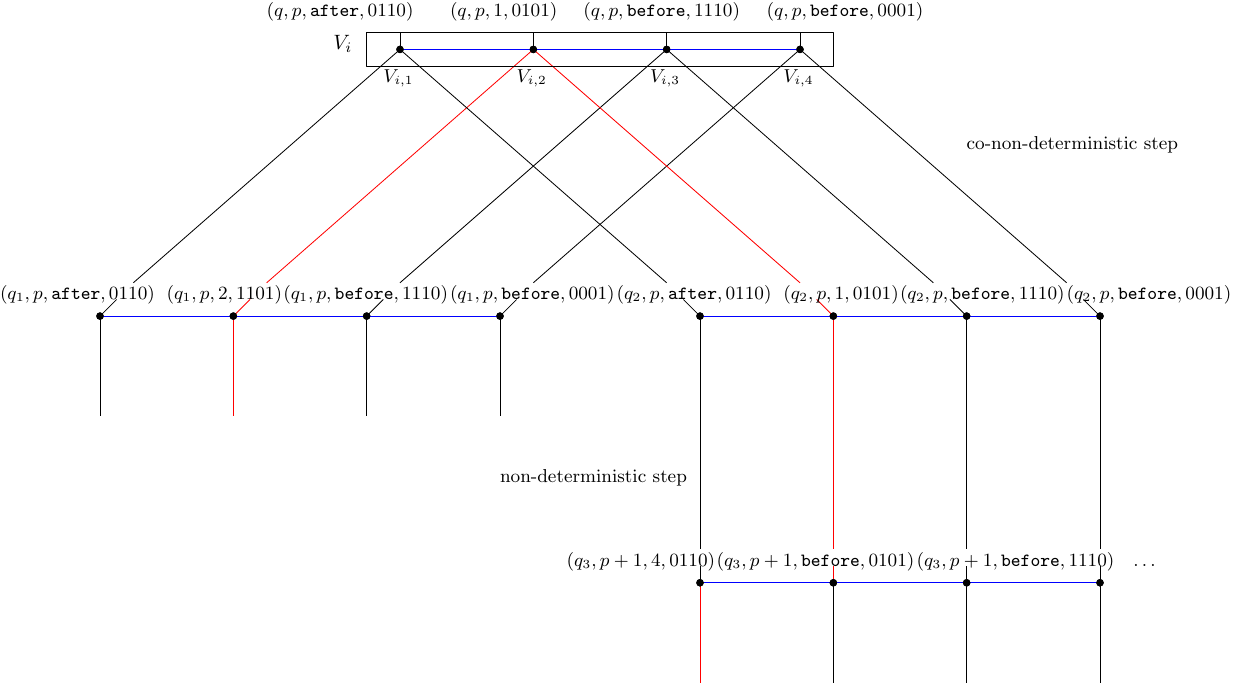}
    \caption{Local structure of a satisfying assignment in the constructed instance of Tree-Chained Multicolor Clique ($k'=4$). The blue edges enforce that the positions of the heads on tapes and the state of the TM are consistent. The black edges enforce that what is written on the work tape does not change in blocks where the head is not present. The red edges enforce that the state, head positions, and the bit at the position of the head on the work tape can be changed exactly by the transitions of the TM. We then add further edges to form cliques, but they do not enforce any constraints.}
    \label{fig:enter-label}
\end{figure}

The edges between vertices of $V_i$ enforce that possible choices of vertices 
correspond to valid configurations. 
There is an edge between $v \in V_{i,j}$ and $w \in V_{i,j+1}$ with corresponding tuples $(q,p,b,w)$ and $(q',p',b',w')$, if and only if $q=q'$, $p=p'$, and either $b'=b \in \{\texttt{after},\texttt{before}\}$, or $b \in [\log n]$ and $b'=\texttt{after}$, or $b=\texttt{before}$ and $b' \in [\log n]$. 

\begin{observation}\label{obs:path-constraint}
    If $v_1,\dots,v_{k'}$ is path with $v_j \in V_{i,j}$, then at most one of the $v_j$ can encode a block with the work tape head, blocks before the head have $b=\before$, blocks after the head have $b=\after$, and all blocks encode the same state and position of the input tape head.
\end{observation}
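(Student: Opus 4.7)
The plan is to unpack the edge definition between consecutive $V_{i,j}$ and $V_{i,j+1}$ directly. First I would note that any edge between $v_{q,p,b,w} \in V_{i,j}$ and $v_{q',p',b',w'} \in V_{i,j+1}$ requires $q=q'$ and $p=p'$; a one-line induction on $j$ therefore gives that all vertices of the path share a common state and a common input-tape head position, settling the second half of the observation.

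The remaining content is the analysis of the $b$-coordinate. I would catalogue the allowed transitions $(b,b')$ from the definition as a small finite automaton on the alphabet $\{\before,\after\}\cup[\log n]$: from $\before$ one may move to $\before$ or to any element of $[\log n]$; from any element of $[\log n]$ one may move only to $\after$; and from $\after$ one may move only to $\after$. In particular $\after$ is absorbing, no element of $[\log n]$ is followed by another element of $[\log n]$ or by $\before$, and no $\before$ is preceded by $\after$ or by an element of $[\log n]$.

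Using this, the conclusion is a short induction on $j$: once an index lies in $\after$ the remainder of the sequence is $\after$, and once an index lies in $[\log n]$ the next index is $\after$. Hence the sequence $b_1,\ldots,b_{k'}$ has the form of a (possibly empty) prefix of $\before$'s, followed by at most one value in $[\log n]$, followed by a (possibly empty) suffix of $\after$'s, which is exactly the statement: at most one $v_j$ encodes the head, blocks before it have $b=\before$, and blocks after it have $b=\after$. I do not anticipate any genuine obstacle; the only bookkeeping is to verify that the degenerate case in which no $b_j$ lies in $[\log n]$ is covered — here the split between $\before$'s and $\after$'s may occur at an empty boundary, and the observation is satisfied vacuously for the head-block.
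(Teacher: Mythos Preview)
Your proposal is correct and matches the paper's intent exactly. The paper states this as an observation without proof, since it is meant to follow immediately from the edge definition between $V_{i,j}$ and $V_{i,j+1}$; your finite-automaton reading of the allowed $(b,b')$ transitions and the ensuing induction are precisely the straightforward unpacking the paper leaves implicit.
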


The edges between vertices of $V_i$ and $V_{i'}$ for $ii' \in F$ enforce that the configurations chosen in $V_i$ and $V_{i'}$ encode configurations with a transition from one to the other.
There is an edge between $v \in V_{i,j}$ and $w \in V_{i',j}$ with corresponding tuples $(q,p,b,w)$ and $(q',p',b',w')$, such that $(b,b') \in \{(\texttt{after},\texttt{after}),(\texttt{before},\texttt{before}),(\texttt{after},1),(\texttt{before},\log n)\}$ if and only if $w=w'$. 
There is an edge between $v \in V_{i,j}$ and $w \in V_{i',j}$ with corresponding tuples $(q,p,b,w)$ and $(q',p',b',w')$, such that $b \in [\log n]$, if and only if, there is a transition of $\M$ from state $q$ to state $q'$ that would write $w'[b]$ when reading $x[p]$ on the input tape and $w[b]$ on the work tape, move the input tape head by $p'-p$ and the work tape by $b'-b$ (where $\texttt{after}=0$ and $\texttt{before}=1+\log n$), and for $\ell \in [\log n]\setminus \{b\}$ $w[\ell]=w'[\ell]$. 

\begin{claim}
    If $v_1,\dots,v_{k'},v'_1,\dots,v'_{k'}$ induce a $2\times k'$ `multicolour grid' (i.e. $v_1,\dots,v_{k'}$ is a path with $v_j \in V_{i,j}$, $v'_1,\dots,v'_{k'}$ is a path with $v'_j \in V_{i',j}$, there are edges $v_jv'_j$ for $j \in [k']$, $ii' \in F$, and $v'_1,\dots,v'_{k'}$ encodes a valid configuration), then $v_1,\dots,v_{k'}$ encodes a valid configuration that can reach the configuration encoded by $v'_1,\dots,v'_{k'}$ using one transition of $\M$.
\end{claim}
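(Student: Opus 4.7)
The plan is to recover from the grid edges exactly one transition edge among the cross-edges $v_j v'_j$; this edge will identify the block containing the work-tape head in the $v$-configuration and simultaneously witness a legal one-step transition of $\M$ into the $v'$-configuration. To set up, I would apply Observation~\ref{obs:path-constraint} to $v_1,\dots,v_{k'}$: these vertices share a common state $q$ and input-head position $p$, at most one block has $b_j \in [\log n]$, and the remaining blocks split into a $\before$-prefix on the left and an $\after$-suffix on the right (around the possible $[\log n]$ block). By hypothesis the primed path encodes a valid configuration, so there is a unique index $j_0^\star$ with $b'_{j_0^\star} \in [\log n]$ and an analogous prefix/suffix pattern, with shared state $q'$ and input-head position $p'$.

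Next, I would analyze the cross-edge $v_{j_0^\star} v'_{j_0^\star}$. Because $b'_{j_0^\star} \in [\log n]$, only three possibilities remain: (i) $b_{j_0^\star} \in [\log n]$, yielding a transition edge at block $j_0^\star$; (ii) $(b_{j_0^\star},b'_{j_0^\star})=(\after,1)$ with $w_{j_0^\star}=w'_{j_0^\star}$; or (iii) $(b_{j_0^\star},b'_{j_0^\star})=(\before,\log n)$ with $w_{j_0^\star}=w'_{j_0^\star}$. Cases (ii) and (iii) imply the $v$-head lies strictly to the left or right of $j_0^\star$ respectively. Let $j^\star$ denote the unique index with $b_{j^\star} \in [\log n]$. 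For every other index $j$, the cross-edge $v_j v'_j$ has no transition option, so the admissible pairs force $b_j = b'_j$ and $w_j = w'_j$. Comparing the two prefix/suffix patterns rules out $|j^\star - j_0^\star| > 1$: any block strictly between $j^\star$ and $j_0^\star$ would require the forbidden pair $(\after,\before)$ or $(\before,\after)$. Hence $j^\star \in \{j_0^\star - 1, j_0^\star, j_0^\star + 1\}$, and $v_1,\dots,v_{k'}$ encodes a valid configuration with head in block $j^\star$.

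Finally, I would read off the $\M$-transition directly from the transition edge at $j^\star$: it provides a rule moving state $q$ to $q'$, reading $x[p]$ and $w_{j^\star}[b_{j^\star}]$, writing $w'_{j^\star}[b_{j^\star}]$, and shifting the input and work heads by $p'-p$ and $b'_{j^\star} - b_{j^\star}$ respectively (with the $\after = 0$, $\before = 1 + \log n$ convention). The equalities $w_j = w'_j$ for $j \neq j^\star$, combined with the boundary patterns $(\after,1)$ or $(\before,\log n)$ when $j^\star \neq j_0^\star$, ensure that no other work-tape cell changes and that the head crosses the block boundary exactly as the transition prescribes. The main obstacle is the careful case bookkeeping: excluding `teleporting' heads with $|j^\star - j_0^\star| > 1$, and verifying in the boundary cases that the $\pm 1$ shift encoded at $j^\star$ is consistent with the $(\after,1)$ or $(\before,\log n)$ condition on the adjacent block $j_0^\star$.
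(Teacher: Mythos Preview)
Your proposal is correct and follows the same approach as the paper, which also appeals to Observation~\ref{obs:path-constraint} and reads the transition off the unique cross-edge at the block carrying the work-tape head; you simply carry out the case analysis more exhaustively, whereas the paper declares the general case ``easy'' and only spells out the block-boundary crossing (your cases (ii) and (iii)). Two small points of bookkeeping to tighten: the existence of $j^\star$ should be \emph{derived} (your forbidden-pair argument at $j_0^\star\mp 1$ does this, but only after you have already named ``the unique index $j^\star$''), and the phrase ``for every other index $j$'' must exclude $j_0^\star$ as well as $j^\star$, since at $j=j_0^\star\neq j^\star$ the admissible pair is $(\after,1)$ or $(\before,\log n)$ rather than $b_j=b'_j$.
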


\begin{proof}
    This follows easily from the construction but we still detail why this is sufficient when the work tape head moves to a different block.
    
    We consider the case when the head moves to the block before it. That is we consider the case where $v'_j$ encodes $b'=\log n$ and $v_j$ encodes $b=\before$. First, note that there is an edge from $v_jv'_j$ allowing this. We use Observation~\ref{obs:path-constraint} and conclude that $v'_{j+1}$ (if it exists) must encode head position \texttt{after} for its block. The edge $v_{j+1}v'_{j+1}$ then enforces that $v_{j+1}$ encodes head position 1 but it can also exist only if there is a transition of $\M$ that moves the work tape head to the previous block and the written character at the beginning of the block encoded by $v'_{j+1}$ corresponds to such transition. Moving to the next block is a symmetric case.
\end{proof}
We have further constraints on the vertices placed in each $V_{i,j}$ based on what $i$ is in $T$.
\begin{itemize}
    \item If $i$ is in a leaf of $T$, then we only have vertices with a corresponding tuple $(q,p,b,w)$ with $q$ an accepting state.
    \item If $i$ is in a `branching' vertex of $T$ (i.e. $i$ has two children), then we only have vertices with a corresponding tuple $(q,p,b,w)$ with $q$ a universal state.
    \item If $i$ is the root, then only vertices corresponding to the initial configuration are allowed.
    \item Otherwise, we only have vertices with tuples encoding an existential state.
\end{itemize}
Furthermore, we have to make sure that when branching we take care of the two distinct transitions.
We actually assume that $T$ has an order on children for vertices with two children. Then for the edge of $T$ to the first (resp. second) child, we only allow the first (resp. second) transition from the configuration of the parent (which must have a universal state).

We now complete the graph with edges that do not enforce constraints so that we may find a multicolour clique instead of only a $2 \times k'$ multicolour grid.
For every $i \in I$, and $j,j' \in [k']$ such that $|j-j'|>1$, we add all edges between $V_{i,j}$ and $V_{i,j'}$. For every $ii' \in F$, and $j,j' \in [k']$ such that $j \neq j'$, we add all edges between $V_{i,j}$ and $V_{i',j'}$.
It should be clear that to find a multicolour clique for some edge $ii'$ after adding these edges is equivalent to finding a `multicolour grid' before they were added\footnote{Asking for these multicolour grids for each edge of the tree instead of multicolour cliques also leads to an XALP-complete problem but we do not use this problem for further reductions. It could however be used as a starting point for new reductions, the corresponding problem can be expressed as binary CSP on the Cartesian product of $P_{k'}$ and a tree.}.

\begin{claim}
    The constructed graph admits a tree-chained multicolour clique, if and only if, there is an accepting run for $\M$ with input $x$ and computation tree $T$.
\end{claim}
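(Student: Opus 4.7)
The plan is to prove both implications directly, using Observation~\ref{obs:path-constraint} and the preceding $2 \times k'$ grid claim as the two main workhorses. The key simplification coming from the final "filler" edges is that a tree-chained multicolor clique $W$ is only really constrained by the substantive edges, namely the consecutive-block edges inside each $V_i$ and the same-block edges across each $ii' \in F$. These are precisely the edges that encode, respectively, "valid configuration" and "valid single transition of $\M$". Combined with the per-node restrictions on allowed tuples (initial at the root, accepting at leaves, universal at branching nodes, existential elsewhere, and first-/second-transition at the two children of a branching node), this turns $W$ into a direct combinatorial avatar of an accepting computation tree of $\M$ whose underlying shape is $T$.

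For the easy direction, I would start from an accepting run of $\M$ on $x$ with computation tree $T$ and, for every $i \in I$ and $j \in [k']$, pick the vertex $v_{i,j} \in V_{i,j}$ whose tuple $(q,p,b,w)$ encodes the restriction of the configuration $c_i$ to the $j$th block of work tape. A direct check then shows that every required consecutive-block edge inside each $V_i$ and every required same-block edge across each $ii' \in F$ is present in the construction: the intra-bag ones because the $k'$ blocks coherently describe one well-formed configuration (unique head position, consistent state and input pointer), and the inter-bag ones because $c_{i'}$ follows from $c_i$ by one transition of $\M$, with the first/second child ordering of $T$ pinning which of the two universal successors is taken. All remaining pairs are trivially connected by filler edges, so $W = \{v_{i,j} \mid i \in I, j \in [k']\}$ is a tree-chained multicolor clique.

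For the converse, given such a clique $W$ with $v_{i,j} = W \cap V_{i,j}$, Observation~\ref{obs:path-constraint} applied to the induced path $v_{i,1}, \dots, v_{i,k'}$ reassembles the tuples into a single valid configuration $c_i$ of $\M$, and the $2 \times k'$ grid claim applied to each edge $ii' \in F$ certifies that $c_{i'}$ is reachable from $c_i$ by one step of $\M$. The leaf/branching/root/interior tuple restrictions then force $c_{\mathrm{root}}$ to be the initial configuration, $c_\ell$ at each leaf to carry an accepting state, and the universal vs. existential classification of $c_i$ to match branching vs. non-branching of $i$ in $T$, so the family $(c_i)_{i \in I}$ with the implicit transition labels is an accepting computation tree of $\M$ on $x$ of shape $T$. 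The main point to be careful about will be the branching case: I need to verify that the two children of a universal node $i$ really encode the two genuinely different transitions out of $c_i$, not the same one twice. This is already handled by the construction, which restricts the allowed inter-bag edges differently along the edge of $T$ to the first child and along the edge to the second child, so the proof only needs to invoke this restriction when reading off the transitions.
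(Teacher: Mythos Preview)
Your proposal is correct and follows essentially the same approach as the paper; the paper merely compresses both directions into a single sentence, phrasing the argument as an induction on the subtrees of $T$, while you spell out the two implications directly using the same building blocks (Observation~\ref{obs:path-constraint} and the $2\times k'$ grid claim). One small point of care: the grid claim as stated takes validity of the child-side path as a hypothesis, so in the converse direction you should apply it inductively (leaves up, or equivalently note that the root constraint pins a genuine head position and propagate) rather than invoking the Observation alone to certify validity at every node---but this is exactly what the paper's induction on $T$ is implicitly doing.
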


\begin{proof}
    The statement follows from a straight-forward induction on $T$ showing that for each configuration $C$ of $\M$ that can be encoded by the construction at $i \in I$, its encoding can be extended to a tree-chained multicolour clique of the subtree of $T$ rooted at $i$, if and only if there is an accepting run of $\M$ from $C$ with as computation tree the subtree of $T$ rooted at $i$.
\end{proof}
Each $V_{i,j}$ has $O(|Q|n^2\log n)$ vertices (for $Q$ the set of states). Edges are only between $V_{i,j}$ and $V_{i',j'}$ such that $ii' \in F$ or $i=i'$. We conclude that there are $g(k)n^{O(1)}$ vertices and edges in the constructed graph per vertex of $T$, which is itself of size $h(k)n^{O(1)}$ so the constructed instance has size $g(k)h(k)n^{O(1)}$, for $g,h$ computable functions. The construction can even be performed using only $g'(k) + O(\log(n))$ space for some computable function $g'$.
Note also that $k'=f(k)$: the new parameter is bounded by a function of the initial parameter. This shows that our reduction is a parameterized pl-reduction, and we conclude XALP-hardness. Combined with Lemma \ref{lem:member}, we proved the following result.

\begin{theorem}
    {\sc Tree-chained Multicolour Clique} is XALP-complete.
    \label{theorem:tcmc-xalpcomplete}
\end{theorem}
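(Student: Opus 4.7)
The plan is to establish the two directions separately. Membership in XALP is already handled by Lemma~\ref{lem:member}, so the bulk of the work is the hardness direction, and I would give a parameterized pl-reduction from the canonical acceptance problem for alternating Turing machines, using characterisation \ref{item:embedded-computation} of Theorem~\ref{thm:equiv}. Fixing an XALP problem, its input $x$ of size $n$, and its parameter $k$, I have an ATM $\M$ whose computation tree $T=(I,F)$ is a binary tree (obtained from a small-depth binary tree by edge subdivisions), uses $f(k)\log n$ memory, and where the $i\in I$ correspond to configurations. The key idea is to represent a configuration at node $i$ by a choice of one vertex in each of $k'=f(k)+O(1)$ colour classes $V_{i,1},\dots,V_{i,k'}$, partitioning the work tape into $k'$ blocks of $\log n$ bits.

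The first step is to define, for each $i\in I$ and $j\in[k']$, a vertex set $V_{i,j}$ whose elements are tuples $(q,p,b,w)$ encoding the state, the input-tape head position, a flag $b$ saying whether block $j$ is to the left of the work-tape head, contains the head (with a position in the block), or is to the right of the head, and the $\log n$ bits of the $j$th block. The second step is to put edges inside each $V_i$ so that a $k'$-clique there is exactly a consistent configuration: edges between $V_{i,j}$ and $V_{i,j+1}$ enforce that $q,p$ agree across blocks and that the head-position flags $b$ are monotone (all \texttt{before}, then possibly one block with the head, then all \texttt{after}); all pairs $V_{i,j},V_{i,j'}$ with $|j-j'|>1$ are made complete bipartite so that internal consistency reduces to a clique condition. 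The third step is to put edges between $V_{i,j}$ and $V_{i',j}$ for $ii'\in F$, encoding one transition of $\M$: the block $j$ that contains the head is the only one whose bits may change (and must change according to a valid transition), and blocks that hand the head to a neighbour are connected to the appropriate \texttt{after}/\texttt{before}/position values in the next node; all other cross edges $V_{i,j}\to V_{i',j'}$ with $j\ne j'$ are made complete so that a tree-chained multicolor clique on the edge $ii'$ is exactly a valid transition.

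The next step is to bake in the acceptance semantics by restricting the vertex sets: at the root $i$, only the initial configuration of $\M$ is allowed; at leaves, only accepting states are allowed; at branching nodes, only universal states, and at non-leaf non-branching nodes, only existential states; and for a branching node with two children, we allow only the first transition on the edge to the first child and only the second transition on the edge to the second child. A short induction on $T$ (bottom-up) then shows that a tree-chained multicolor clique extending the root configuration exists if and only if there is an accepting run of $\M$ on $x$ with computation tree $T$, because clique constraints locally encode ``consistent configuration'' and ``valid transition'', while the node-type restrictions encode the branching mode.

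Finally, I would verify the complexity bookkeeping: each $V_{i,j}$ has $|Q|\cdot n\cdot O(\log n)\cdot n$ vertices, edges only connect $V_{i,j}$ and $V_{i',j'}$ with $i=i'$ or $ii'\in F$, and $T$ has $h(k)n^{O(1)}$ nodes, so the output instance has size $g(k)n^{O(1)}$ and new parameter $k'=f(k)+O(1)$, both bounded by functions of $k$. Producing each local gadget only requires remembering $i$, a constant number of $(q,p,b,w)$ tuples, and a counter, which fits in $g'(k)+O(\log n)$ space, so the reduction is a pl-reduction. Combined with Lemma~\ref{lem:member}, this yields XALP-completeness. The main obstacle, in my view, is making the transition gadget honest in the two delicate cases where the work-tape head crosses a block boundary: I would handle these exactly as in the excerpt, by adding edges between $V_{i,j}$ with $b=\texttt{before}$ and $V_{i',j}$ with $b'=\log n$ (and the symmetric case), and then using the $V_{i,j+1}\text{--}V_{i',j+1}$ edge together with the within-row monotonicity of $b$ to force the neighbouring block to update its head flag consistently.
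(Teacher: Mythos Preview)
Your proposal is correct and follows essentially the same approach as the paper: membership via Lemma~\ref{lem:member}, hardness by the Cook-style reduction from the ATM acceptance problem using characterisation~\ref{item:embedded-computation}, with the work tape split into $f(k)$ blocks of $\log n$ bits, the $(q,p,b,w)$ vertex encoding, the within-node consistency edges, the between-node transition edges, the node-type restrictions, and the completion to a full multicolor clique instance. The size and space bookkeeping you outline also matches the paper's analysis, so there is nothing to add.
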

One may easily modify this to the case where each colour class has the same size, by adding isolated vertices.

By taking local complements of the graph, i.e. for each node $i\in V(T)$, we complement the subgraph induced by $\bigcup_{j \in [k']} V_{i,j}$, and for each edge $ii' \in F$, we complement the edge set $E(\bigcup_{j \in [k']} V_{i,j},\bigcup_{j \in [k']} V_{i',j})$, we directly obtain the following result.

\begin{corollary}
{\sc Tree-Chained Multicolour Independent Set} is XALP-complete.
\end{corollary}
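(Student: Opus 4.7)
The plan is to prove both membership and hardness by minor adjustments of the arguments already given for \textsc{Tree-Chained Multicolor Clique}.

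For membership in XALP, I would repeat the argument of Lemma~\ref{lem:member} verbatim, except that the alternating Turing machine verifies \emph{non-adjacency} instead of adjacency between the vertices guessed for $V_{i,j}$ and $V_{i',j'}$ whenever $i=i'$ or $ii'\in F$. The resource bounds (polynomial treesize, $O(k\log n)$ space) are unaffected.

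For XALP-hardness, I would give a pl-reduction from \textsc{Tree-Chained Multicolor Clique}. Given an instance $(T,k,\{V_{i,j}\},G)$, I construct a new graph $G'$ on the same vertex set, whose edges are exactly the non-edges of $G$ restricted to pairs $V_{i,j}\times V_{i',j'}$ with $(i=i'\text{ or }ii'\in F)$ and $(i,j)\neq (i',j')$. Since both problems only consider pairs of classes of this form, a set $W$ containing one vertex from each $V_{i,j}$ forms a tree-chained multicolor clique in $G$ if and only if it forms a tree-chained multicolor independent set in $G'$: in both formulations the required relation between $W\cap V_{i,j}$ and $W\cap V_{i',j'}$ is flipped simultaneously by the complementation. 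The parameter $k$ is preserved, and the construction is trivially computable in logarithmic space (for each candidate pair of vertices we just query whether the corresponding edge is present in $G$ and output the opposite), so this is a pl-reduction.

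There is no real obstacle here: the only thing to be careful about is that the edge set of $G$ is only ever consulted between ``adjacent'' class pairs, so complementation must be taken relative to the set of potentially existing edges rather than on $\binom{V}{2}$. Once this is observed, correctness of the reduction is immediate and XALP-completeness follows by combining with the previous theorem.
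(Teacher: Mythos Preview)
Your proposal is correct and matches the paper's approach: the paper simply says ``by taking the complement of the graph, we directly obtain the following result,'' relying on the earlier remark that edges may be assumed to lie only between relevant class pairs. Your write-up just makes this complementation (and the restriction to incident class pairs) explicit, which is exactly the content of the paper's one-line proof.
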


\textsc{Multicolour Clique}, \textsc{Chained Multicolour Clique},
and \textsc{Tree-Chained Multi\-colour Clique} can be seen as Binary CSP problems, by replacing vertex choice by assignment choice.

In the \textsc{Binary CSP} problem, we are given a graph $G=(V,E)$,
a set of colours $\mathcal{C}$,
for each vertex $v\in V$ a set of colours $C(v)$, and for each edge $uv\in E$,
a set of pairs of colours $C(u,v) \subseteq \mathcal{C}\times \mathcal{C}$, and ask if we can assign to each vertex $v\in V$
a colour $c(v)\in C(v)$, such that for each edge $uv\in E$, $(f(u),f(v))\in C(u,v)$.

\begin{corollary}
    \textsc{Binary CSP} is XALP-complete with each of the following parameters:
    \begin{enumerate}
        \item treewidth,
        \item treewidth plus degree,
        \item tree-partition width.
    \end{enumerate}
    \label{corollary:bincsp-tw}
\end{corollary}

\begin{proof}
Membership for treewidth as parameter follows as usual. The colour of (uncoloured) vertices is non-deterministically chosen when they are introduced. We maintain the colour of vertices of the current bag in the working space. We use co-non-deterministic steps when the tree decomposition branches. We check that introduced edges satisfy the colour constraint. This uses $O(k\log n)$ space, and runs in polynomial total time. Membership for the two other parameterisations follows from this as well.

Hardness for treewidth plus degree follows from Theorem~\ref{theorem:tcmc-xalpcomplete}.
Suppose we are given an instance of \textsc{Tree-Chained Multicolour Clique}, as described earlier in this section.
We build a graph $H$ by taking for each set $V_{ij}$ a vertex
$v_{ij}$ with $C(v_{ij}) = V_{ij}$, i.e., the vertices in
the \textsc{Tree-Chained Multicolour Clique} now become colours
in the \textsc{Binary CSP} instance. We take an edge $v_{ij}v_{i'j'}$ in $H$ whenever $i=i'$ or $ii'\in F$, $j,j'\in [1,k]$, $(i,j)\neq (i',j')$, and allow for such an edge a pair of
colours $(v,w)$ if and only $vw\in E$. The transformation is mainly
a reinterpretation of a version of \textsc{Multicolour Clique} as
a version of \textsc{Binary CSP}. One easily observes solutions of 
the \textsc{Tree-Chained Multicolour Clique} instance and solutions
of the \textsc{Binary CSP} instance correspond one-to-one to each other, and thus
we have a correct reduction.

Note that $H$ has degree at most $4k-1$ and treewidth at most $2k-1$:
use a tree decomposition $(T=(I,F), \{X_i ~|~i\in T\})$, by
choosing an arbitrary root in $T$, and letting $X_i$ contain
all vertices of the form $v_{ij}$ and $v_{i'j}$ with $i'$ the parent of $i$ in $T$. Hardness for treewidth plus degree as parameter now follows.

Graphs of treewidth $k$ and maximum degree $\Delta$ have
tree-partition width $O(k\Delta)$ (see \cite{Wood09}), and thus
XALP-hardness for tree-partition width as parameter follows.
\end{proof}

\section{More XALP-complete problems}
\label{sec:complete2}
In this section, we prove a collection of problems on graphs, given with a tree-structure, to be complete
for the class XALP. The proofs are of different types: in some cases, the proofs are new, in some cases, 
reformulations of existing proofs from the literature, and in some cases, it suffices to observe that an
existing transformation from the literature keeps the width-parameter at hand bounded.

\subsection{List colouring}
The problems {\sc List Colouring} and {\sc Pre-colouring Extension} with pathwidth as parameter are
XNLP-complete~\cite{BodlaenderGNS21}. We give a simple proof (using a well-known reduction) of XALP-completeness with treewidth
as parameter. Previously, Jansen and Scheffler~\cite{JansenS97} showed that these problem are in XP,
and Fellows et al.~\cite{FellowsFLRSST11} showed $W[1]$-hardness.

\begin{theorem}
    {\sc List Colouring} and {\sc Pre-colouring Extension} are XALP-complete with treewidth as
    parameter.
\end{theorem}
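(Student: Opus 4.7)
My plan is to adapt the XNLP-completeness proof of~\cite{BodlaenderGNS21} (for the pathwidth parameterisation) to the treewidth parameterisation, essentially by replacing a ``chain'' with a ``tree'' throughout. Both membership and hardness require only minor modifications.

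For membership, I would use characterisation~(\ref{item:poly_tree}) of Theorem~\ref{thm:equiv}. Given a \textsc{List Coloring} instance $(G,L)$ together with a nice tree decomposition $(T,\{X_i\})$ of $G$ of width $k$, an alternating Turing machine processes $T$ top-down, storing in its $O(k\log n)$ working memory a proper list colouring of the current bag. Introduce nodes perform one non-deterministic step to choose the colour of the newly-introduced vertex from its list (checking consistency with edges to vertices already in the bag), forget nodes advance deterministically, and join nodes issue a co-non-deterministic step under which the two sub-decompositions are verified independently, each starting from the common interface colouring. The resulting computation tree has $|I|\cdot n^{O(1)}$ nodes, so \textsc{List Coloring} (and its syntactic restriction \textsc{Pre-colouring Extension}) lies in XALP.

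For hardness, I would reduce from \textsc{Tree-Chained Multicolor Independent Set} in the same fashion that~\cite{BodlaenderGNS21} reduces from \textsc{Chained Multicolor Independent Set}. For every $(i,j)$ one creates a ``choice'' vertex $x_{i,j}$ with $L(x_{i,j}) = V_{i,j}$, and for every constraint pair $((i,j),(i',j'))$ with $i=i'$ or $ii' \in F$ one attaches the \emph{non-adjacency gadget} of~\cite{BodlaenderGNS21} on a constant number of fresh vertices whose lists enumerate the non-edges of $G$ between $V_{i,j}$ and $V_{i',j'}$. Placing in the bag of node $i$ the $2k$ choice vertices $\{x_{i,j}, x_{p(i),j} : j \in [k]\}$ (with $p(i)$ the parent of $i$) together with the $O(k^2)$ gadget vertices interfacing $i$ with its parent and with its in-node constraints yields a tree decomposition of $H$ of width $f(k)$. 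The construction is local and computable in $O(\log n)+f(k)$ space, hence is a pl-reduction; hardness for \textsc{Pre-colouring Extension} is obtained by re-packaging the same gadget in exactly the way used in the cited pathwidth reduction.

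The main obstacle is checking that the treewidth of $H$ stays bounded by a function of $k$: the large ``colour'' sets $V_{i,j}$ (which can have up to $n$ elements) must \emph{not} leak into the tree decomposition. This is ensured, as in~\cite{BodlaenderGNS21}, by designing the non-adjacency gadget so that only $O(k^2)$ local interface vertices are shared between successive bags, and the tree decomposition of $H$ then inherits its shape from $T$ with a width blow-up depending only on~$k$.
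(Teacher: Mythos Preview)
Your overall strategy matches the paper's exactly: membership via an alternating traversal of the tree decomposition, and hardness by a reduction from \textsc{Tree-Chained Multicolor Independent Set} with one ``choice'' vertex $x_{i,j}$ per class whose list is $V_{i,j}$, plugging in the gadget from~\cite{BodlaenderGNS21}. The paper even gets the same treewidth shape you describe (bag at $i$ contains the $2k$ choice vertices for $i$ and its parent).

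There is, however, a concrete inaccuracy in your description of the gadget that matters for the treewidth argument. The gadget does \emph{not} consist of ``a constant number of fresh vertices whose lists enumerate the non-edges''. It consists of one fresh vertex \emph{per edge} $vw\in E(G)$ between incident classes, with the two-element list $\{v,w\}$; if both endpoints are selected, this vertex loses both colours. So there are polynomially many gadget vertices per class pair, not $O(1)$, and their lists encode edges, not non-edges. The reason treewidth stays bounded is not that only $O(k^2)$ gadget vertices sit in each bag, but that every gadget vertex has degree exactly~$2$ and therefore lives in its own pendant bag below the bag containing its two choice-vertex neighbours. This gives width $2k-1$, sharper than your $f(k)$ and with no $O(k^2)$ term. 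Your ``main obstacle'' paragraph is thus aimed at the wrong difficulty: large colour sets are not the issue, the point is that the many gadget vertices are all low-degree and can be hung off the main skeleton one at a time.

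For \textsc{Pre-colouring Extension}, the paper (like you) just cites the standard translation: replace each colour missing from $L(v)$ by a fresh degree-$1$ neighbour of $v$ pre-coloured with that colour; this does not raise the treewidth.
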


\begin{proof}
Membership follows as the problems are special cases
of \textsc{Binary CSP}.

We first show XALP-hardness of {\sc List colouring}. We reduce from
\textsc{Binary CSP} with treewidth as parameter. 

Suppose we are given a graph $G=(V,E)$, with for each vertex a colour
set $C(v)$, and for each edge a set of allowed colour pairs $C(v,w)\subseteq C(v)\times C(w)$.

First, we can assume that the colour sets $C(v)$ are disjoint.
The hardness proof that gives Corollary~\ref{corollary:bincsp-tw} 
gives such disjoint
sets. (Alternatively, we can rename for each vertex its colours and adjust the
constraints accordingly.)

For each vertex $v\in V$, its list of colours $L(v)=C(v)$.

Now, for each edge $vw\in E$, we remove the edge, but add for each pair
of colours $(c,c')\in C(v)\times C(w)\setminus C(v,w)$ a new vertex $x_{c,c'}$
with $L(x_{c,c'}) = \{ c, c'\}$, and make this new vertex adjacent to $v$ and to $w$.
This new vertex enforces that we cannot use the colour pair $(c,c')$ for the vertices
$v$ and $w$; as we do this for each not allowed colour pair, this ensures that
the restriction of the colouring of $V$ satisfies all colour constraints of the Binary CSP-instance.

The treewidth of the resulting graph is the maximum of the treewidth of $G$ and 2;
take a tree decomposition of $G$, and for each new vertex $x_{c,c'}$ incident to
$v$ and $w$, we take a bag consisting of $\{x_{c,c'}, v, w\}$ and make that bag
incident to a bag that contains $v$ and $w$. This shows the result
for \textsc{List Colouring}.

\smallskip

The standard reduction from {\sc Pre-colouring Extension} to {\sc List colouring} that adds for
each forbidden colour $c$ of a vertex $v$ a new neighbour to $v$ pre-coloured with $c$ does not
increase the treewidth, which shows XALP-hardness for {\sc Pre-colouring Extension} with treewidth as parameter.
\end{proof}

\subsection{Tree variants of Weighted Satisfiability}
From {\sc Tree-Chained Multicolour Independent Set}, we can show XALP-completeness of tree variants of what in \cite{BodlaenderGNS21} was called {\sc Chained Weighted CNF-Satisfiability} and its variants (which in turn are analogues of {\sc Weighted CNF-Satisfiability}, see e.g.~\cite{DowneyF99,DowneyF13}).

\defparaproblem{\textsc{Tree-Chained Weighted CNF-Satisfiability}}{A tree $T=(I,F)$, sets of variables $(X_i)_{i \in I}$, and clauses $C_1,\dots,C_m$, each with either only variables of $X_i$ for some $i \in I$, or only variables of $X_i$ and $X_j$ for some $ij \in F$.}{$k$.}{Is there an assignment of at most $k$ variables in each $X_i$ that satisfies all clauses?}

\defparaproblem{\textsc{Positive Partitioned Tree-Chained Weighted CNF-Satisfiability}}{A tree $T=(I,F)$, sets of variables $(X_i)_{i \in I}$, and clauses of positive literals $C_1,\dots,C_m$, each with either only variables of $X_i$ for some $i \in I$, or only variables of $X_i$ and $X_j$ for some $ij \in F$. Each $X_i$ is partitioned into $X_{i,1},\dots,X_{i,k}$.}{$k$.}{Is there an assignment of exactly one variable in each $X_{i,j}$ that satisfies all clauses?}

\defparaproblem{\textsc{Negative Partitioned Tree-Chained Weighted CNF-Satisfiability}}{A tree $T=(I,F)$, sets of variables $(X_i)_{i \in I}$, and clauses of negative literals $C_1,\dots,C_m$, each with either only variables of $X_i$ for some $i \in I$, or only variables of $X_i$ and $X_j$ for some $ij \in F$. Each $X_i$ is partitioned into $X_{i,1},\dots,X_{i,k}$.}{$k$.}{Is there an assignment of exactly one variable in each $X_{i,j}$ that satisfies all clauses?}

\begin{theorem}
    \textsc{Positive Partitioned Tree-Chained Weighted CNF-Satisfiability}, \textsc{Negative Partitioned Tree-Chained Weighted CNF-Satisfiability}, and \textsc{Tree-Chained Weighted CNF-Satisfiability} are XALP-complete.
\end{theorem}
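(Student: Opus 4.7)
The plan is to establish XALP-membership uniformly for all three variants, and then to prove XALP-hardness by adapting to the tree setting the XNLP-hardness reductions developed for the chained analogues in~\cite{BodlaenderGNS21}, starting from the complete problems of Section~\ref{section:complete}.

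For membership, the argument mirrors that of Lemma~\ref{lem:member}. An alternating Turing machine traverses $T$ from the root; at each node $i \in I$ it non-deterministically guesses the assignment to $X_i$---at most $k$ true variables in the general version, and exactly one per $X_{i,j}$ in the two partitioned versions---and keeps this guess together with the inherited guess for the parent $p$ of $i$ in $O(k \log n)$ bits. It then verifies on the fly every clause touching only $X_i$ or $X_i \cup X_p$, discards the parent's witness, and recurses into the child or children, using a co-non-deterministic step at each branching node of $T$. The total computation tree has $f(k) n^{O(1)}$ nodes and uses $O(k \log n)$ working space, placing all three problems in XALP.

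For hardness, the cleanest reductions start from the completeness results of Section~\ref{section:complete}. For the negative-partitioned variant, I would reduce from \textsc{Tree-Chained Multicolor Independent Set}: introduce a Boolean variable $x_v$ for every vertex $v \in V$, set $X_{i,j} = \{x_v \mid v \in V_{i,j}\}$, and, for every edge $uv$ of $G$ between classes that are equal or incident in $T$, add the binary clause $\neg x_u \vee \neg x_v$. Selecting exactly one true variable per $X_{i,j}$ then corresponds precisely to picking a tree-chained multicolor independent set. The general (mixed) variant is obtained by augmenting this construction with the positive clauses $\bigvee_{v \in V_{i,j}} x_v$ for each part; combined with the weight budget $k$ per $X_i$, these force exactly one representative per $X_{i,j}$. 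The parameter is preserved up to a factor of $k$, and the construction can be implemented in $f(k) + O(\log n)$ space, hence it is a pl-reduction.

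The positive-partitioned case is the main obstacle, since a clause made only of positive literals cannot directly forbid two specific variables from being true simultaneously. Here I would reduce from \textsc{Tree-Chained Multicolor Clique}, following the idea behind the chained analogue in~\cite{BodlaenderGNS21}: subdivide each edge $ii' \in F$ of $T$ by an auxiliary tree node $t_{ii'}$, whose parts are indexed by pairs $(j,j') \in [k]^2$ and contain one variable $y_e$ per edge $e \in E(G)$ with endpoints in $V_{i,j}$ and $V_{i',j'}$; then encode, using only positive clauses, the requirement that the selected edge-variable in each auxiliary part be incident to the vertices selected at the two neighbouring original nodes. This adds only $O(k^2)$ parts per edge of $T$, so the new parameter is $f(k)$, and the reduction remains a pl-reduction. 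Combining the three cases with membership yields XALP-completeness for all three problems.
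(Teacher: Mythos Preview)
Your membership argument and the reductions for the negative-partitioned and the general variant are essentially identical to the paper's proof. (A small difference: for the general variant the paper also adds the pairwise exclusion clauses $\neg x \vee \neg y$ inside each part, but your observation that the at-least-one clauses together with the budget~$k$ already force exactly one per part is correct and slightly cleaner.)

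For the positive-partitioned variant, however, you take a detour that the paper avoids entirely. The paper simply observes that in the exactly-one-per-part setting, any negative literal $\neg x_v$ with $x_v \in X_{i,j}$ is logically equivalent to the positive disjunction $\bigvee_{y \in X_{i,j}\setminus\{x_v\}} y$; hence the negative-partitioned instance just constructed can be turned into a positive-partitioned one by a purely syntactic rewriting of each clause, with no change to $T$, to the parts, or to the parameter. Your route through \textsc{Tree-Chained Multicolor Clique} with auxiliary edge-parts is more involved, inflates the parameter to $k^2$, and---crucially---leaves unexplained exactly the step you flag as the main obstacle: how to encode ``the selected edge-variable is incident to the selected vertex'' using only positive literals. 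The natural encoding of this implication ($y_e \Rightarrow x_u$, i.e.\ $\neg y_e \vee x_u$) contains a negative literal, and the only way to eliminate it is the very complement-within-the-part trick that makes the whole detour unnecessary. So your sketch for the positive case is not wrong in spirit, but it is both incomplete at the key point and strictly dominated by the one-line rewriting the paper uses.
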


\begin{proof}
We first show membership for \textsc{Tree-Chained Weighted CNF-Satisfiability}, which implies membership for the more structured versions. We simply follow the tree shape of our instance by branching co-non-deterministically when the tree branches. We keep the indices of the $2k$ variables chosen non-deterministically for the `local' clauses in the working space. We then check that said clauses are satisfied.

We first show hardness for \textsc{Negative Partitioned Tree-Chained Weighted CNF-Satisfiability} by reducing from \textsc{Tree-Chained Multicolour Independent Set}.
For each vertex $v$, we have a Boolean variable $x_v$. We denote by $X_{i,j}$ the set of variables $\{x_v : v \in V_{i,j}\}$, and by $X_i$ the set of variables $\{x_v : v \in V_i\}$. This preserves the partition properties.
For each edge $uv$, we add the clause $\neg x_u \vee \neg x_v$.

\begin{observation}
    $U$ is multicolour independent set if and only if $\{x_u : u \in U\}$ is a satisfying assignment.
\end{observation}

To reduce to \textsc{Positive Partitioned Tree-Chained Weighted CNF-Satisfiability}, we simply replace negative literals $\neg x_v$ for $x_v \in X_{i,j}$ by a disjunction of positive literals $\vee_{y \in X_{i,j}\setminus \{x_v\}} y$. This works because, due to the partition constraint, a variable $x \in X_{i,j}$ is assigned $\bot$ if and only if another variable $y \in X_{i,j} \setminus \{x\}$ is assigned $\top$.

To reduce to \textsc{Tree-Chained Weighted CNF-Satisfiability}, we simply express the partition constraints using clauses. For each $X_{i,j}$, we add the clauses $\vee_{y \in X_{i,j}} y$, and for each pair $\{x,y\} \subseteq X_{i,j}$ the clause $\neg x \vee \neg y$. This enforces that we pick at least one variable, and at most one variable, for each $X_{i,j}$.
\end{proof}

\subsection{Logarithmic Treewidth}

Although XALP-complete problems are in XP and not in FPT, there is a link between XALP and single exponential FPT algorithms on tree decompositions. Indeed, by considering instances with treewidth $k \log n$, where $k$ is the parameter, the single exponential FPT algorithm becomes an XP algorithm. We call this parameter logarithmic treewidth.

\defparaproblem{\textsc{Independent Set} parameterized by logarithmic treewidth}{A graph $G=(V,E)$, with a given tree decomposition of width at most $k \log |V|$, and an integer $W$.}{$k$.}{Is there an independent set of $G$ of size at least $W$?}

\begin{theorem}
    {\sc Independent Set} with logarithmic treewidth as parameter is XALP-complete.
    \label{theorem:islogtw}
\end{theorem}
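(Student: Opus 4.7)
For membership, the plan is to build an alternating Turing machine that traverses the given tree decomposition $(T, \{X_i\})$ with co-non-deterministic steps at bags with several children, analogous to the membership arguments for \textsc{List Coloring} and \textsc{Tree-chained Multicolor Clique}. At bag $X_i$ the machine non-deterministically guesses the intersection $S_i \subseteq X_i$ of the candidate independent set with the bag, which fits in $|X_i| \leq k\log n + 1$ bits, and verifies that $S_i$ is independent in $G[X_i]$ and agrees with $S_{\mathrm{parent}(i)}$ on $X_i \cap X_{\mathrm{parent}(i)}$. To handle the weight target $W$, the machine maintains a remaining quota, subtracts $|S_i \setminus X_{\mathrm{parent}(i)}|$ at the current bag, and co-non-deterministically splits the rest among the children; at leaves the remaining quota must equal $0$. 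This uses $O(k\log n)$ memory and produces a polynomial-size computation tree, giving membership in XALP via characterization~\ref{item:poly_tree} of Theorem~\ref{thm:equiv}.

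For hardness I plan to reduce from \textsc{Tree-Chained Multicolor Clique}. Given an instance with tree $T = (I, F)$, classes $V_{i,j}$ of size at most $n$, and graph $G$, I build a graph $H$ as follows. For each $(i,j)$, introduce a \emph{bit-selector gadget} consisting of $\lceil \log n \rceil$ disjoint edges $\{a_{i,j,\ell}b_{i,j,\ell}\}_\ell$; an independent set picking one endpoint per edge encodes a binary string that identifies a vertex in $V_{i,j}$. For every pair $(V_{i,j}, V_{i',j'})$ with $i=i'$ or $ii' \in F$ and $(i,j)\neq(i',j')$, and every edge $uv \in E(G)$ with $u \in V_{i,j}$, $v \in V_{i',j'}$, add an \emph{edge-witness} vertex $w_{uv}$ adjacent to precisely those $2\log n$ bit-selector vertices that are \emph{not} selected in the binary encodings of $u$ and $v$. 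A direct check shows that $w_{uv}$ can belong to an independent set of $H$ exactly when the gadget of $(i,j)$ encodes $u$ and that of $(i',j')$ encodes $v$, so at most one witness per pair can appear in any independent set.

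Set $W = k|I|\log n + P$, where $P$ counts the admissible class pairs. Since each bit-selector gadget contributes at most $\log n$ to any independent set, reaching $W$ forces every gadget to encode some vertex and every pair to contribute a witness; the latter is possible iff the two encoded vertices span an edge of $G$, which is precisely the tree-chained multicolor clique condition. For the treewidth bound, arrange bags $B_i$ along $T$, where $B_i$ collects the bit-selector vertices of $i$ together with those of its parent in $T$ (size at most $4k\log n$), and attach each $w_{uv}$ as a leaf bag adjacent to a $B_i$ containing its $2\log n$ neighbors. This yields treewidth $O(k\log n)$, so the parameter (treewidth divided by $\log|V(H)|$) is bounded by a function of $k$, and the construction runs in polynomial time and $f(k) + O(\log n)$ space, giving a pl-reduction. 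The hard part will be reconciling the exactly-one-choice constraints over $n$-sized classes with the $O(k\log n)$ treewidth budget: the bit-selector compresses a vertex choice into $\log n$ bits, and the key trick is that each adjacency check can be delegated to a single witness vertex with a small, local neighborhood, so witnesses can be scattered as leaves of the tree decomposition without inflating any single bag.
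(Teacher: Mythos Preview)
Your membership argument is fine and matches the paper's. The hardness reduction, however, is broken.

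The claim ``at most one witness per pair can appear in any independent set'' is false. It relies implicitly on every bit-selector gadget picking exactly one endpoint per edge, but nothing in the construction enforces this: it is merely encouraged by the weight target. An independent set can \emph{sacrifice} a whole bit-selector gadget (select none of its $2\log n$ vertices), after which every witness touching that gadget has all its neighbours there unselected, so arbitrarily many witnesses for the same pair become simultaneously available. Concretely, take $T$ a single node, $k=3$, classes of size~$4$, and let $G$ have edges $u_0v_0,u_1v_0,u_2v_0,u_3v_0,u_0w_0$ and no $V_2$--$V_3$ edge. This is a NO instance of \textsc{Tree-Chained Multicolor Clique}, yet selecting nothing in gadget $(1,1)$, encoding $v_0$ in gadget $(1,2)$, encoding $w_0$ in gadget $(1,3)$, and taking the five witnesses $w_{u_0v_0},\ldots,w_{u_3v_0},w_{u_0w_0}$ gives an independent set of size $0+2+2+4+1=9=W$. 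Sacrificing $\log n$ vertices in one gadget can buy $\Theta(n)$ witnesses, so the accounting collapses.

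The paper avoids this by reducing from \textsc{Positive Partitioned Tree-Chained Weighted CNF-Satisfiability} rather than directly from the clique problem. It uses the same bit-encoding idea for variables (your bit-selector gadgets are exactly the paper's variable gadgets), but each clause gets a path-based gadget whose maximum independent set size is a fixed number $\ell+2$ that is attainable only when some literal vertex $v_i$ is selected, and the $v_i$'s are wired into the path so that selecting several of them cannot raise the gadget's contribution above $\ell+2$. That structural cap on each constraint gadget is precisely the ingredient your isolated witness vertices lack. If you want to salvage your direct route, you would need to add interactions among the witnesses of a pair (or an auxiliary path structure) so that their total contribution is capped at~$1$ without inflating the treewidth; simply making them a clique, for instance, does not work.
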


\begin{proof}
We start with membership which follows from the usual dynamic programming on the tree decomposition.
We maintain for each vertex $v$ in the current bag whether $v$ is in the independent set or not. When introducing a vertex $v$, we non-deterministically decide if $v$ is put in the independent set or not. We reject if an edge is introduced between two vertices of the independent set. We make a co-non-deterministic step whenever the tree decomposition is branching. Since we only need one bit of information per vertex in the bag, this requires only $O(k\log n)$ working space, as for the running time we simply do a traversal of the tree decomposition which is only polynomial treesize.

We show hardness by reducing from \textsc{Positive Partitioned Tree-Chained Weighted CNF-Satisfiability}.
We can simply reuse the construction from \cite{BodlaenderGNS21} and note that the constructed graph has bounded logarithmic treewidth instead of logarithmic pathwidth because we reduced from the tree-chained SAT variant instead of the chained SAT variant. We describe the gadgets for completeness.
First, the SAT instance is slightly adjusted for technical reasons. For each $X_{i,j}$, we add a clause containing exactly its initial variables. This makes sure that the encoding of the chosen variable is valid. We assume the variables in each $X_{i,j}$ to be indexed starting from 0.

\textbf{Variable gadget.} For each $X_{i,j}$, let $t_{i,j}= \lceil \log_2 |X_{i,j}| \rceil$. We add edges $\widehat{0}_\alpha \widehat{1}_\alpha$, $\alpha \in [1,t_{i,j}]$.

\textbf{Clause gadget.} For each clause with $\ell$ literals, we assume $\ell$ to be even by adding a dummy literal if necessary. We add paths $p_0,\dots,p_{\ell + 1}$, and $p'_1,\dots,p'_\ell$.
For $i \in [1,\ell]$, we add the edge $p_ip'_i$. We then add vertex $v_i$ for $i \in [1,\ell]$, which represents the $i$th literal of the clause. Let $b_1\dots b_{t_{i',j'}}$ be the binary representation of the index of the corresponding variable of $X_{i',j'}$. Then $v_i$ is adjacent to $p_i,p'_i$ and the vertices $\widehat{1-b_\alpha}$ for $\alpha \in [1,t_{i,j}]$. For the dummy literal, there is no vertex $v_i$.

The clause gadget has an independent set of size $\ell + 2$ if and only if it contains a vertex $v_i$.
When the variable gadgets have one vertex in the independent set on each edge, a vertex $v_i$ of a clause can be added to the independent set only if the independent set contains exactly the vertices of the variable gadget that give the binary representation of the variable corresponding to $v_i$.

Hence, the SAT instance is satisfiable if and only if there is an independent set of size $\sum_{i,j} t_{i,j} + \sum_i 2 + \ell_i$ in our construction.
\end{proof}

\begin{corollary}
The following problems are XALP-complete with logarithmic treewidth as parameter: \textsc{Vertex Cover},
\textsc{Red-Blue Dominating Set}, \textsc{Dominating Set}.
\end{corollary}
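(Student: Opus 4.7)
The plan is to prove membership uniformly for all three problems and then establish hardness by short, treewidth-preserving reductions from \textsc{Independent Set} parameterized by logarithmic treewidth (Theorem~\ref{theorem:islogtw}).

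For membership, I would run the standard bottom-up dynamic programming on the given tree decomposition, but simulated by an alternating Turing machine in the style of Lemma~\ref{lem:member} and the proof of Theorem~\ref{theorem:islogtw}. At each bag we maintain the status of every vertex currently in the bag: for \textsc{Vertex Cover}, one bit per vertex (in the cover or not); for \textsc{Dominating Set}, a three-valued marker (in the set, already dominated, or still needing to be dominated); and for \textsc{Red-Blue Dominating Set}, the analogous three-valued marker together with the color information that is already part of the input. Since a bag has at most $k \log n + 1$ vertices, the state fits in $O(k \log n)$ bits of work tape. The machine walks the tree decomposition, guessing the status of each vertex non-deterministically at introduce nodes, performing a co-non-deterministic split at join nodes to check the two subtrees independently, and verifying local constraints (covered edges, valid domination promises) whenever an edge or a forget node is processed. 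This gives polynomial treesize and $f(k)\log n$ space, hence membership in XALP.

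For hardness, \textsc{Vertex Cover} follows from the classical bijection: $G$ has an independent set of size $\geq W$ if and only if it has a vertex cover of size $\leq |V(G)| - W$. The graph, and therefore the provided tree decomposition of width $k\log n$, is untouched, so this is immediately a pl-reduction from \textsc{Independent Set}. For \textsc{Dominating Set}, I would use the standard treewidth-preserving reduction from \textsc{Vertex Cover}: subdivide every edge $uv$ by a new vertex $w_{uv}$ and, to force every original vertex of degree zero to be dominated, attach a single pendant to each original vertex. A minimum vertex cover then corresponds to a minimum dominating set of the same (shifted) size, and the new bags simply add the subdivision and pendant vertices to one existing bag each, which increases the width by at most a constant. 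Finally, \textsc{Red-Blue Dominating Set} admits the textbook reduction from \textsc{Dominating Set}: duplicate each vertex $v$ into a blue copy $v^B$ and a red copy $v^R$, keep the original edges between blue copies (to allow them to be chosen), and add the edges $v^B w^R$ for every $vw \in E(G)$ together with $v^B v^R$; a solution then must pick blue vertices whose closed neighbourhood covers every red vertex. Each new bag is obtained from an old bag by taking both copies of its vertices, so the width at most doubles.

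The main obstacle is ensuring that all three reductions can be implemented as pl-reductions (or at least ptl-reductions) and that each of them yields a tree decomposition of width $O(k \log n)$ \emph{together with} the modified graph, since the parameter in our target problems is the width of the supplied decomposition. This is routine for the identity reduction to \textsc{Vertex Cover}, and for \textsc{Dominating Set} and \textsc{Red-Blue Dominating Set} it amounts to a local, bag-by-bag rewriting that can be carried out in $O(\log n)$ additional space by iterating over the nodes of the input decomposition. Once this bookkeeping is in place, the three XALP-completeness statements follow by combining the membership argument with the respective reduction.
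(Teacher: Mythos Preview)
Your membership argument and the \textsc{Vertex Cover} reduction are fine and match the paper. The gap is in your reduction from \textsc{Vertex Cover} to \textsc{Dominating Set}: subdividing every edge (and adding pendants) does \emph{not} turn vertex covers into dominating sets of any fixed shifted size. After subdivision, an original vertex $u$ is no longer adjacent to its former neighbours, only to subdivision vertices, so a vertex cover $S$ of $G$ need not dominate the original vertices outside~$S$. For instance, $G=P_3$ has vertex cover number~$1$, but the subdivided graph $P_5$ has domination number~$2$; if you additionally attach a pendant to every original vertex, the closed neighbourhoods of the pendants are pairwise disjoint and a minimum dominating set is simply $V(G)$, independently of the vertex cover number. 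The treewidth-preserving reduction you seem to have in mind attaches, for each edge $uv$, a new vertex adjacent to both $u$ and $v$ while \emph{keeping} the edge $uv$ (a triangle gadget, not a subdivision); with that construction a minimum dominating set of the new graph equals a minimum vertex cover of $G$ whenever $G$ has no isolated vertices.

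The paper sidesteps this by reversing your order: it goes \textsc{Vertex Cover} $\to$ \textsc{Red-Blue Dominating Set} $\to$ \textsc{Dominating Set}. The edge-subdivision step you wrote down is precisely the correct reduction to \textsc{Red-Blue Dominating Set} (original vertices blue, subdivision vertices red; a blue set dominates all red vertices iff it is a vertex cover of $G$). From \textsc{Red-Blue Dominating Set} one then reaches \textsc{Dominating Set} by adding two fresh vertices $x_0,x_1$ with $x_1$ adjacent to $x_0$ and to every blue vertex; the minimum dominating set in the resulting graph is exactly one more than the minimum red-blue dominating set, and treewidth increases by at most one. Your own reduction from \textsc{Dominating Set} to \textsc{Red-Blue Dominating Set} is correct, but it is stranded once the \textsc{Dominating Set} step fails; the quickest fix is either to switch to the triangle gadget, or to adopt the paper's order and reuse your subdivision construction for \textsc{Red-Blue Dominating Set} instead.
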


\begin{proof}
The result for \textsc{Vertex Cover} follows directly from Theorem~\ref{theorem:islogtw} and the well known fact that a graph
with $n$ vertices has a vertex cover
of size at most $L$, iff it has an independent set of size at least $n-L$.
Viewing \textsc{Vertex Cover} as a special case of \textsc{Red-Blue Dominating Set} gives the following graph: subdivide all edges of $G$, and ask if a set of $K$ original (blue) vertices dominates all new (red) 
subdivision vertices; as the subdivision step does not increase the treewidth,
XALP-hardness of \textsc{Red-Blue Dominating Set} with treewidth as parameter follows. 
To obtain XALP-hardness of \textsc{Dominating Set}, add to the instance $G'$ of \textsc{Red-Blue Dominating Set},
two new vertices $x_0$ and $x_1$ and
edges from $x_1$ to $x_0$ and all blue vertices; the treewidth increases by at most one, and the minimum size of a dominating set 
in the new graph is exactly one larger than the minimum size of a red-blue dominating set in $G'$.
Membership in XALP is shown similarly to the proof of Theorem~\ref{theorem:islogtw}.
\end{proof}

\subsection{All-or-Nothing Flow}
A \emph{flow network} is a tuple $(G=(V,E),s,t,c)$ with $G=(V,E)$ a
directed graph, $s\in V$ a vertex called the \emph{source},
$t\in V$ a vertex called the \emph{sink}, and $c: E\rightarrow \mathbb{Z}^+$ a \emph{capacity function}, assigning to each arc a
positive capacity, given in unary.
A \emph{flow} in flow network $(G=(V,E),s,t,c)$ is a function
$f: E \rightarrow \mathbb{N}$, that assigns to each arc a non-negative flow value, such that
\begin{enumerate}
    \item for each $e\in E$: $0\leq f(e)\leq c(e)$, and
    \item for each $v\in V\setminus \{s,t\}$: $\sum_{xv\in E} f(xv) = \sum_{vy\in E} f(vy)$.
\end{enumerate}
The \emph{value} of a flow $f$ is $\sum_{sy\in E} f(sy) - \sum_{xs\in E} f(xs)$. 
%HB Corrected
For more background on flow, we refer to the various text books on algorithms or flow, e.g.,~\cite{AhujaMO93}.

A flow $f$ is an \emph{all-or-nothing flow} if for each
$e\in E$: $f(e)\in \{0,c(e)\}$, i.e., when there is flow over an 
arc then all capacity of the arc is used. Deciding whether
there is an all-or-nothing flow of a given value $r$ in a given flow network is NP-complete~\cite{Alexandersson01}. In~\cite{BodlaenderCW22}, it was shown that this problem is XNLP-complete with pathwidth as parameter. As that proof uses a reduction
from a problem that has no `tree variant' yet, we use here a different proof.

\defparaproblem{\textsc{All-or-Nothing Flow} parameterized by treewidth}{A flow network $(G=(V,E),s,t,c)$, with a given tree decomposition of width at most $k$, and an integer $r$.}{$k$.}{Is there an all-or-nothing flow from $s$ to $t$ in $G$ with value exactly $r$?}

We remark that the proof below can also be used (without changes)
to show XALP-completeness for the variant where we ask whether there
is a flow of value \emph{at least} $r$.

\begin{theorem}
    \textsc{All-or-Nothing Flow} parameterized by treewidth is
    XALP-complete.
\end{theorem}

\begin{proof}
    Membership can be shown in the usual way. For each introduced arc $e$, we guess whether it is used ($f(e)=c(e)$) or not ($f(e)=0$), and the status of a bag is a function that gives
    for each vertex $v$ in the bag the difference of the total inflow so far and the total outflow so far ($\sum_{vy\in E} f(vy) - \sum_{xv\in E} f(xv)$). Because the capacities are given in unary, storing these values requires only $O(k\log(n))$ bits.

    For the hardness proof, we reduce from \textsc{BinaryCSP} with
    treewidth plus degree as parameter.
    First, we build an
    equivalent instance where all sets of colours $C(v)$ are
    disjoint: $v\neq w \Rightarrow C(v)\cap C(w)=\emptyset$. This
    can be easily done by simple adaption of the instance.

So, we assume we are given a graph $G=(V,E)$ of treewidth at most $k$ and degree at most $d$, a set of colours
$\mathcal{C}$, for each $v\in V$ a set $C(v)\subseteq \mathcal{C}$ with these sets disjoint, and for each ordered pair of vertices
$v,w$ that forms an edge, a set of allowed colour pairs
$C(v,w) \subseteq C(v) \times C(w)$; and finally, we have 
a tree decomposition of $G$ of width at most $k$.

In the proof, we use the technique of representing colours by
flow values in a Sidon set; a similar technique was used in
\cite{multicommodityflow}.

A \emph{Sidon set} is a set of positive integers $\{x_1, x_2, \ldots, x_n\}$ such that each pair of integers from the set has
a different sum, i.e., for $\{i,i'\}\neq\{j,j'\}$, $x_i+x_{i'} \neq x_j + x_{j'}$. Sidon sets are also known as Golomb rulers.
Erd\H{o}s and Tur\'{a}n~\cite{ErdosT41} gave a method to construct
Sidon sets; as discussed in \cite{BodlaenderW20}, their construction
implies the following.

\begin{theorem} \label{thm:sidon}
A Sidon set with $n$ elements in $\{1,2,\ldots, 4n^2\}$ can be found in $O(n \sqrt{n})$ time and logarithmic space.
\end{theorem}

The next step in the construction is to build a Sidon set
with $|\mathcal{C}|$ elements, following the construction
of Erd\"{o}s and Tur\'{a}n~\cite{ErdosT41}. Write $L=4n^2+1$.
Note that if we take a Sidon set, and add the same number to
each element of the set, we again obtain a Sidon set. Now, we
add $2L$ to each element of the just created Sidon set. Each of
these numbers is between $2L+1$ and $3L-1$; we assign to each
color $\gamma\in \mathcal{C}$ a unique element $S(\gamma)$ from this latter set. I.e., for each $\gamma\in C$, we have $2L < S(\gamma)< 3L$, and
different pairs of colours have a different sum of their $S(\gamma)$ values.

In the flow network we are constructing, each vertex $v \in V$ from $G$ is represented
by $d_v+1$ vertices, with $d_v$ the degree of $v$ in $G$.
Call these vertices $v_0, v_1, v_2, \ldots, v_{d_v}$.
The construction relies on two gadgets: one that models assigning a colour to a vertex, and one that models checking for an edge that the
assigned colours are an allowed pair.

In the description, we allow first parallel arcs with different
capacities. As a final step, we will subdivide each arc once --- if we subdivide an arc with some capacity $\alpha$, then both
resulting arcs get capacity $\alpha$ as well. Clearly, the network without subdivisions has an all-or-nothing flow with the required value, if and only if the network with subdivisions has such. Also,
given a tree decomposition of the network with parallel arcs, we
can build one of the same width 
(assuming the width is at least 2) for the network with subdivisions, 
as follows: if we subdivide an arc $vw$ to $vx$ and $xw$, then we
add a new bag containing $v$, $w$, and $x$ and make that bag incident to a bag that contains $v$ and $w$ --- the latter exists due to the definition of a tree decomposition.

To model the assignment of a colour to a vertex, we have a gadget
with one addition vertex $x_v$. We have an arc from $s$ to $x_v$
with capacity $6L$; for each colour $\gamma\in C(v)$, we have
an arc from $x_v$ to $v_0$ with capacity $S(\gamma)$ and an arc
from $x_v$ to $t$ with capacity $6L-S(\gamma)$.
The intuition is as follows: setting to colour of $v$ to $\gamma$
corresponds to sending $6L$ from $s$ to $x_v$, $S(\gamma)$ from
$x_v$ to $v_0$ and $6L-S(\gamma)$ from $x_v$ to $t$.
See Figure~\ref{fig:a-o-n1}.

\begin{figure}
    \centering
    \includegraphics{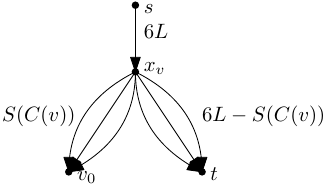}
    \caption{Gadget that chooses a colour for a vertex.  For each $\gamma \in C(v)$, there is an arc to $v_0$ with capacity $S(\gamma)$ and  for each $\gamma \in C(v)$, there is an arc to $t$ with capacity $6L-S(\gamma)$. (The shorthand notations $S(C(v))$ and $6L-S(C(v))$ indicate these values respectively.) }
    \label{fig:a-o-n1}
\end{figure}

Next, we describe the gadget that models a check that the pair of
colours assigned to the endpoints of an edge $\{v,w\}$ is
in $C(v,w)$. 

We assume we have an ordering of the vertices; for each
vertex, order its neighbours accordingly.
Suppose $w$ is the $j$th neighbour of $v$,
and $v$ is the $j'$th neighbour of $w$; thus, $1\leq j\leq d_v$,
and $1\leq j' \leq d_w$. 
The gadget has two additional vertices: $y_{vw}$ and $z_{vw}$.
(We have one gadget per edge rather than per arc, and so misuse notation a little:
$y_{vw}$ and $y_{wv}$ represent the same vertex, likewise for 
$z_{vw}$ and $z_{wv}$.) 
For each $\gamma\in C(v)$, we have an arc from $v_{j-1}$ to $y_{vw}$
with capacity $S(\gamma)$, and an arc from $z_{vw}$ to $v_j$ with
capacity $S(\gamma)$.
For each $\gamma'\in C(w)$, we have an arc from $w_{j'-1}$ to $y_{vw}$
with capacity $S(\gamma')$, and an arc from $z_{vw}$ to $w_{j'}$ with
capacity $S(\gamma')$.
For each $(\gamma,\gamma')\in C(v,w)$, we have an arc
from $y_{vw}$ to $z_{vw}$ with capacity $S(\gamma)+S(\gamma')$.
See Figure~\ref{fig:a-o-n2}.

\begin{figure}
    \centering
    \includegraphics{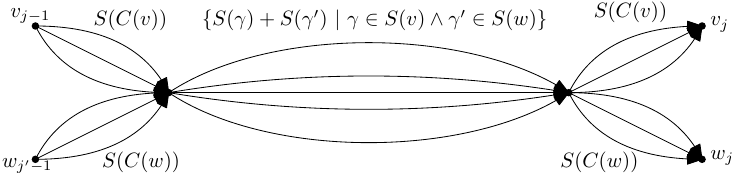}
    \caption{Gadget that checks if an edge is properly coloured. Shorthand notation: $S(C(v))$: for each $\gamma \in C(v)$, there is an arc with capacity $S(\gamma)$. Similar for $S(C(w))$.}
    %\carla{$S(v)$ should have been $S(C(v))$ and same for $S(w)$? Perhaps it's more readable here if the direction of the arrow can be seen earlier instead of collapsing all at the end (but this is not so important, it's also fine as it is).}}
    %HB: I've corrected the figure. I do not know how to easily adjust these arrows in ipe.
    \label{fig:a-o-n2}
\end{figure}

The intuition here is as follows. If $v$ has colour $\gamma$
and $w$ colour $\gamma'$, then we send $S(\gamma)$ from $v_{j-1}$ to
$y_{vw}$, $S(\gamma')$ from $w_{j'-1}$ to $y_{vw}$,
$S(\gamma)+S(\gamma')$ from $y_{vw}$ to $z_{vw}$, $S(\gamma)$ from $z_{vw}$ to
$v_{j}$ and $S(\gamma')$ from $z_{vw}$ to $w_{j'}$.  The property of
Sidon sets ensures that we cannot reroute flow in another way, i.e.,
the  amount of flow that departs from $v_{j-1}$ equals the amount of flow that arrives
at $v_j$.

Finally, we have for each vertex $v\in V$, and $\gamma\in C(v)$,
an arc from $v_{d_v}$ to $t$ with capacity $S(\gamma)$.

Let $H$ be the resulting graph, with $s$ the source, $t$ the sink.

\smallskip

We first will show how to build a tree decomposition of width
$O(dk)$ of $H$. Take a tree decomposition $(T=(I,F), \{X_i ~|~i\in I\})$ of $G$. For each $i\in I$, we take a bag $Y_i$ that contains $s$; $t$; for each vertex $v\in X_i$, the vertices $x_v$,
$v_{0}, v_1, \ldots, v_{d_v}$; and for each edge $\{v,w\}$ in $G$,
if $v,w\in X_i$, the vertices $y_{vw}$ and $z_{vw}$. It is easy
to see that this indeed is a tree decomposition of $H$ (still with parallel arcs), and each
bag clearly is of size $O(dk)$. As discussed above, we can obtain
an equivalent instance without parallel arcs by subdividing arcs
and adding bags with three vertices.

\begin{lemma}
Suppose $G$ has $n$ vertices.
    There is an all-or-nothing flow with value $6Ln$ from $s$ to
    $t$ in $H$, if and only $G$ has a colouring $h$ with for
    each $v\in V$, $h(v)\in C(v)$, and for each $\{v,w\}\in E$, $(h(v),h(w))\in C(v,w)$.
\end{lemma}

\begin{proof}
    First, suppose that $G$ has a colouring that fulfils the demands.
Build a flow as follows. For each vertex $v\in V$ coloured with $\gamma$, send $6L$ from $s$ to $x_v$, $S(\gamma)$ from
$x_v$ to $v_0$, $6L-S(\gamma)$ from $x_v$ to $t$,
$S(\gamma)$ from each $v_i$ to the corresponding $y_{vw}$, $S(\gamma)$ from
each $z_{vw}$ to the corresponding $v_i$, and $S(\gamma)$ from
$v_{d_v}$ to $t$. 
For each edge $\{v,w\}$, if $v$ is coloured $\gamma$ and $w$ is coloured
$\gamma'$, then send $S(\gamma)+S(\gamma')$ flow from $y_{vw}$ to $z_{vw}$.
One can check that this is an all-or-nothing flow from $s$ to $t$; its value is $6Ln$, as $s$ sends $6L$ flow to each vertex $x_v$.

\smallskip

Now, suppose we have a flow $f$ with value $6Ln$ from $s$ to $t$.
As the total capacity of all outgoing arcs from $s$ equals $6Ln$,
each of these arcs is used, so each $x_v$ receives $6L$ inflow.
Outgoing arcs from $x_v$ have capacities of the form $S(\gamma)$ or
$6L-S(\gamma)$, but as for each colour $\gamma$, $2L<S(\gamma)<3L$, the only possible way to have an outflow of exactly $6L$ is to send
$S(\gamma)$ flow over one outgoing arc, and $6L-S(\gamma)$ flow over another outgoing arc, 
for some colour $\gamma$.

Thus, each vertex $v_0$ receives flow $S(\gamma)$ over one of its
incoming arcs, for some colour $\gamma$. By construction $\gamma\in C(v)$.
Let $h$ be the colouring of $G$ obtained by colouring each $v$
with the colour $\gamma$ such that the flow from $x_v$ to $v_0$ equals
$S(\gamma)$. We claim that each vertex $v_i$ receives $S(h(v))$ flow,
and that we send for each edge $\{v,w\}$ in $G$, $S(h(v))+S(h(w))$ flow from $y_{vw}$ to $z_{vw}$.
We show this by induction. Note that the claim holds for $v_0$ for all $v\in V$.
Observe that $H$ is acyclic.
Suppose $w$ is the $j$th neighbour of $v$ and $v$ is the $j'$th neighbour of $w$.
By the induction hypothesis, $v_{j-1}$ receives $S(h(v))$ flow, and $w_{j'-1}$
receives $S(h(w))$ flow. So, $y_{vw}$ receives   $S(h(v))+S(h(w))$
flow which it sends to $z_{vw}$. Now, we use the Sidon property: the only
possible way for $z_{vw}$ to send out this flow is to
send $S(h(v))$ flow to $v_j$ and $S(h(w))$ flow to $w_{j'}$; any
other combination of flows would imply a second pair of values
in the Sidon set with the same sum. This shows that the induction
hypothesis holds.

Now, we use the Sidon property for the second time. As we send $S(h(v))+S(h(w))$ flow from $y_{vw}$ to $z_{vw}$, there must
be an arc between these vertices with this capacity. So, 
there is a pair $(\gamma'',\gamma''')\in C(v,w)$ with $S(\gamma'')+S(\gamma''')= S(h(v))+S(h(w))$. By the Sidon property, $\{\gamma'',\gamma'''\} = \{h(v),h(w)\}$, and as the sets $S(v)$ for the vertices $v$ are
disjoint, we have $\gamma''=h(v)$ and $\gamma'''= h(w)$, so $(h(v),h(w)\in C(v,w)$. As this holds for each edge, we have a colouring that
satisfies the constraints.
\end{proof}
By observing that the transformation can be done in logarithmic space, the result now follows.
\end{proof}

\subsection{Other problems}
\label{section:otherproblems}
Several XALP-hardness proofs follow from known reductions. Membership is usually easy to prove, by observing that the known
XP-algorithms can be turned into XALP-membership by guessing table entries, and using the stack to store the information for a
left child when processing a right subtree.

\begin{corollary}
The following problems are XALP-complete:
\begin{enumerate}
    \item {\sc Chosen Maximum Outdegree}, {\sc Circulating Orientation}, {\sc Minimum Maximum Outdegree}, {\sc Outdegree Restricted Orientation}, and
    {\sc Undirected Flow with Lower Bounds}, with the treewidth as parameter.
    \item \textsc{Max Cut} and \textsc{Maximum Regular Induced Subgraph} with clique-width as parameter.
\end{enumerate}
\end{corollary}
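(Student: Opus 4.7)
The plan is to establish membership uniformly using the XP dynamic-programming algorithms available for each problem, and then to obtain hardness by invoking existing XNLP-hardness reductions, checking that they carry over unchanged to the tree-structured setting.

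For membership in XALP, each listed problem admits a standard XP dynamic program on a tree decomposition (respectively, on a $k$-expression for the cliquewidth variants) whose table entries at a single bag can be stored in $O(f(k)\log n)$ bits. I would simulate such a program by an alternating Turing machine following the template used in the membership proof for \textsc{List Coloring} and \textsc{Independent Set} in this section: traverse the decomposition top-down; at every node, non-deterministically guess the local DP state (an orientation of the incident edges for the orientation problems, a flow value on each newly introduced edge for \textsc{Undirected Flow with Lower Bounds}, a cut side for each labelled class in the \textsc{Max Cut} case, and a membership bit together with the running degree count for \textsc{Maximum Regular Induced Subgraph}); verify consistency with the newly introduced vertices, edges, capacity bounds, or degree constraints; and at a branching node issue a co-non-deterministic step that recurses into each child. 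Because only the state of the bag currently in scope has to be resident on the work tape, the space bound is $O(f(k)\log n)$, and the total computation tree has size polynomial in $n$.

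For hardness, I would reuse the known XNLP-hardness reductions for the pathwidth (respectively, linear cliquewidth) versions of these problems, as developed in \cite{BodlaenderGNS21,BodlaenderCW22,BodlaenderGJ22}. Those reductions start from a chained variant of \textsc{Multicolor Clique} and build a graph by glueing ``slice'' gadgets along the path, producing a path decomposition of width $f(k)$. Starting instead from the XALP-complete \textsc{Tree-Chained Multicolor Clique} (or its independent-set variant) from Section~\ref{section:complete}, exactly the same slice-by-slice construction, now performed along the tree, produces a graph that inherits a tree decomposition of the same width, which gives XALP-hardness with treewidth as parameter. The cliquewidth problems in item (2) are handled analogously: the known reductions compose slice gadgets using a bounded number of labels, and rescheduling these operations along a tree (processing one child at a time, relabelling before the join) preserves the label budget $f(k)$.

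The main obstacle is verifying, reduction by reduction, that the linear order of the source path plays no essential role, so that the slice gadgets can be attached to a branching node without introducing unintended edges, double-counted flow, or unexpected label interactions. For the orientation and flow problems this is routine because the gadgets that connect two consecutive bags are local and outdegree (respectively, flow) at a branching bag adds naturally over the children. For the cliquewidth reductions the verification reduces to scheduling a sequence of relabel and disjoint-union operations along a binary tree, which stays within the same label budget; once this is checked the XALP-hardness for both \textsc{Max Cut} and \textsc{Maximum Regular Induced Subgraph} with cliquewidth as parameter follows immediately.
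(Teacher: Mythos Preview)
Your proposal is correct and takes essentially the same approach as the paper: membership via the standard DP-on-decomposition template with non-deterministically guessed table entries and co-non-deterministic branching at join nodes, and hardness by reusing the existing XNLP reductions (from \cite{BodlaenderCW22,Szeider11} for the orientation/flow problems and \cite{BodlaenderGJ22} for the cliquewidth problems), observing that when fed a tree-shaped source instance they yield bounded treewidth (respectively cliquewidth) rather than bounded pathwidth (respectively linear cliquewidth). The paper is terser and simply asserts that the cited reductions preserve the width parameter up to an additive constant, whereas you spell out the branching-node verification; note also that the relevant hardness references for item~(1) are \cite{BodlaenderCW22} and \cite{Szeider11} rather than \cite{BodlaenderGNS21}.
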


\begin{proof}
(1): The reductions given in \cite{BodlaenderCW22} and \cite{Szeider11} can be used; one easily observes that these reductions keep the treewidth of the constructed instance bounded by a function of the treewidth of the original instance (often, a small additive constant is added.)

(2): The reductions given in \cite{BodlaenderGJJL22} can be reused with minimal changes, only the bound on linear clique-width becomes a bound on clique-width because of the `tree-shape' of the instance to reduce. 
\end{proof}

{\sc Chosen Maximum Outdegree}, {\sc Circulating Orientation}, {\sc Minimum Maximum Outdegree}, 
{\sc Outdegree Restricted Orientation}, and {\sc Undirected Flow with Lower Bounds},
together with {\sc All-or-Nothing Flow} were shown to be XNLP-complete with pathwidth as
parameter in \cite{BodlaenderCW22}. Gima et al.~\cite{GimaHKKO22} showed that {\sc Minimum Maximum
Outdegree} with vertex cover as parameter is $W[1]$-hard. For related results, see also \cite{Szeider11}.

\section{Conclusions}
\label{section:conclusions}
We expect many (but not all) problems that are (W[1]-)hard and in XP for treewidth as parameter to be XALP-complete;
our paper gives good starting points for such proofs. Let us give an explicit example. The \textsc{Pebble Game Problem} \cite{DowneyF13,KasaiAI79}
parameterized by the number of pebbles is complete for XP, which is equal to XAL=A[$\infty,f\log$]. The problem corresponds to deciding whether there is a winning strategy in an adversarial two-player game with $k$ pebbles on a graph where the possible moves depend on the positions of all pebbles. We can expect variants with at most $f(k) + O(\log n)$ moves to be complete for XALP.

Completeness proofs give a relatively precise complexity classification of problems. In particular, XALP-hardness proofs indicate that we do not expect a deterministic algorithm to use less than XP space if it runs in XP time. Indeed the inclusion of XNLP in XALP is believed to be strict, and already for XNLP-hard problems we have the following conjecture.

\begin{conjecture}[Slice-wise Polynomial Space Conjecture~\cite{PilipczukW18}]
\label{conj}
No XNLP-hard problem has an algorithm that runs in $n^{f(k)}$ time and $f(k)n^{O(1)}$ space, with $f$ a computable function, $k$ the parameter, $n$ the input size.
\end{conjecture}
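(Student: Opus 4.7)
The plan is to reduce this conjecture to a cleaner, parameter-free time-space lower bound on non-deterministic logarithmic space, and then either invoke such a separation or prove one directly via a parameterized diagonalization. A natural starting point is to pick a canonical XNLP-hard problem whose reductions transparently encode $O(\log n)$-space NTM computations, for instance the pathwidth variant of \textsc{List Coloring} from \cite{BodlaenderGNS21} or the tiling-style acceptance problem used to prove the original XNLP-hardness results. Any hypothetical $n^{f(k)}$-time, $f(k)n^c$-space algorithm for such a problem, instantiated at a carefully chosen slice $k=k(n)$, must collapse into a uniform algorithm with simultaneous polynomial time and polylogarithmic (or near-polylogarithmic) space for an NL-hard language.

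First I would formalise a padding lemma of the following shape: for every $s(n)=O(\log n)$, a chosen XNLP-complete language $L$ can encode the acceptance problem of an $s(n)$-space NTM using parameter $k=g(s(n))$ for a specific computable $g$. This is essentially the content of the XNLP-hardness proofs, but needs to be restated so as to expose the exact trade-off between parameter value and simulated space. Combined with the conjectured algorithm, taking $k$ a slowly growing function of $n$ would yield $\mathsf{NSPACE}(\log n) \subseteq \mathsf{DTISP}(n^{O(1)}, n^{o(1)})$ or similar, so the conjecture would follow from the classical open separation $\mathsf{NL} \not\subseteq \bigcup_c \mathsf{DTISP}(\mathrm{poly}, (\log n)^c)$.

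Failing a clean classical proof, I would attempt a parameterized diagonalization. One enumerates algorithms $(M_e, f_e, c_e)$ that claim to witness the forbidden resource bound, and constructs an XNLP language $L_{\mathrm{diag}}$ that, at slice $k=e$, differs from $M_e$ on a diagonal input. The sharp constraint is that $L_{\mathrm{diag}}$ must itself lie in XNLP, so the diagonalizer has only $f(k)\log n$ space and polynomial time to simulate $M_e$; a hierarchy-theorem-style argument à la Stockmeyer would give such a separated language provided one can simulate an $f_e(k)n^{c_e}$-space machine within $f(k)\log n$ space for $k$ large, which is false in general. Recovering something useful requires exploiting the slice structure: simulate only the prefix of $M_e$'s computation corresponding to the first $\log n$ space cells, and combine with a padding that forces the rejected accepting run to live in that prefix. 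Transferring this artificial separation to a natural XNLP-complete problem would then require a pl-reduction preserving both time and space very tightly.

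The main obstacle is intrinsic: restricting the conjecture to any fixed value of $k$ already implies $\mathsf{NL}\neq\mathsf{L}$ and, more strongly, that some NL-complete language has no polynomial-time polylog-space algorithm, which is a long-standing open problem with no current approach. Thus any proof must either resolve a classical time-space question for non-deterministic log-space, or exploit the parameterization in a qualitatively new way not yet seen in the literature. I do not see a route that avoids this barrier, so the realistic deliverable of the plan is not a proof but rather a precise reduction of Conjecture~\ref{conj} to a clean classical hypothesis about $\mathsf{NL}$ versus $\mathsf{DTISP}$, together with a formalisation of the padding lemma that makes such a reduction available for every XNLP-complete problem established in this paper.
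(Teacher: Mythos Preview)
This statement is a \emph{conjecture}, not a theorem: the paper states it without proof, attributing it to Pilipczuk and Wrochna~\cite{PilipczukW18}, and uses it only as motivation for why XALP-hardness results are interesting. There is therefore no ``paper's own proof'' to compare your proposal against.

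You have in fact recognised this yourself. Your analysis correctly identifies that any unconditional proof of the conjecture would imply, at a single fixed slice $k$, that some NL-complete problem has no simultaneous polynomial-time and $n^{c}$-space algorithm, and hence in particular $\mathsf{NL}\neq\mathsf{L}$. That barrier is exactly why the statement is labelled a conjecture. Your proposed deliverable --- a padding lemma reducing the conjecture to a clean classical hypothesis of the form $\mathsf{NL}\not\subseteq\mathsf{DTISP}(\mathrm{poly},n^{o(1)})$ --- is a reasonable research direction, but it is not a proof, and it is not something the paper attempts. The diagonalization sketch you give also does not work, for the reason you already note: an XNLP machine cannot simulate an $f_e(k)n^{c_e}$-space computation within its own $f(k)\log n$ space budget, and the ``prefix'' patch does not repair this because the hypothetical algorithm is under no obligation to confine its accepting computation to any prescribed prefix of its workspace.

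In short: the paper does not prove this statement, and neither does your proposal; your concluding paragraph is the accurate assessment.
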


While XNLP and XALP give a relatively simple framework to classify problems in terms of simultaneous bound on space and time, the parameter is allowed to blow up along the reduction chain. One may want to mimic the fine grained time complexity results based on the (Strong) Exponential Time Hypothesis. In this direction, one could assume that Savitch's theorem is optimal as was done in \cite{ChenEM19}.

Since XNLP is above the W-hierarchy, it could be interesting to study the relationship of XALP with some other hierarchies like the A-hierarchy and the AW-hierarchy. 
It is also unclear where to place \textsc{List-colouring} parameterized by tree-partition-width\footnote{A tree-partition of a graph $G$ is a partition of $V(G)$ into (disjoint) bags $(B_i)_{i\in V(T)}$, where $T$ is a tree,  such that $uv\in V(G)$ implies that the bags of $u$ and $v$ are the same or adjacent in $T$. 
The width is the size of the largest bag, and the tree-partition-width of $G$ is found by taking the minimum width over all tree-partitions of $G$.}.
It was shown to be in XL and W[1]-hard \cite{ListColTrees} but neither look like good candidates for completeness.

\bibliographystyle{plainurl}
\bibliography{references}

\end{document}